\documentclass[%
 reprint,
 amsmath,amssymb,
 aps,
]{revtex4-1}
\usepackage{amsmath}
\usepackage{booktabs}
\usepackage{array}
\usepackage{amsthm}
\usepackage{float}
\usepackage{graphicx}
\usepackage{dcolumn}
\usepackage{bm}
\usepackage{hyperref}
\usepackage{graphicx}
\usepackage{setspace}
\usepackage{amsmath}
\usepackage{xcolor}
\usepackage{multirow}
\usepackage{float}
\usepackage{caption}
\usepackage{multirow}
\usepackage{setspace}
\usepackage{appendix}
\usepackage{bm}
\usepackage{makecell}
\usepackage{array}
\usepackage{tabularx}
\usepackage{booktabs}
\usepackage{makecell}
\usepackage{threeparttable}

\usepackage{float}
\usepackage[english]{babel}
\usepackage{caption}
\usepackage[figuresright]{rotating}

\RequirePackage[normalem]{ulem} 
\RequirePackage{color}\definecolor{RED}{rgb}{1,0,0}\definecolor{BLUE}{rgb}{0,0,1} 
\providecommand{\DIFaddbegin}{} 
\providecommand{\DIFaddend}{} 
\providecommand{\DIFdelbegin}{} 
\providecommand{\DIFdelend}{} 
\providecommand{\DIFaddbeginFL}{} 
\providecommand{\DIFaddendFL}{} 
\providecommand{\DIFdelbeginFL}{} 
\providecommand{\DIFdelendFL}{} 
\newcommand{\DIFscaledelfig}{0.5}
\RequirePackage{settobox} 
\RequirePackage{letltxmacro} 
\newsavebox{\DIFdelgraphicsbox} 
\newlength{\DIFdelgraphicswidth} 
\newlength{\DIFdelgraphicsheight} 
\LetLtxMacro{\DIFOincludegraphics}{\includegraphics} 
\newcommand{\DIFaddincludegraphics}[2][]{{\color{blue}\fbox{\DIFOincludegraphics[#1]{#2}}}} 
\newcommand{\DIFdelincludegraphics}[2][]{
\sbox{\DIFdelgraphicsbox}{\DIFOincludegraphics[#1]{#2}}
\settoboxwidth{\DIFdelgraphicswidth}{\DIFdelgraphicsbox} 
\settoboxtotalheight{\DIFdelgraphicsheight}{\DIFdelgraphicsbox} 
\scalebox{\DIFscaledelfig}{
\parbox[b]{\DIFdelgraphicswidth}{\usebox{\DIFdelgraphicsbox}\\[-\baselineskip] \rule{\DIFdelgraphicswidth}{0em}}\llap{\resizebox{\DIFdelgraphicswidth}{\DIFdelgraphicsheight}{
\setlength{\unitlength}{\DIFdelgraphicswidth}
\begin{picture}(1,1)
\thicklines\linethickness{2pt} 
{\color[rgb]{1,0,0}\put(0,0){\framebox(1,1){}}}
{\color[rgb]{1,0,0}\put(0,0){\line( 1,1){1}}}
{\color[rgb]{1,0,0}\put(0,1){\line(1,-1){1}}}
\end{picture}
}\hspace*{3pt}}} 
} 
\LetLtxMacro{\DIFOaddbegin}{\DIFaddbegin} 
\LetLtxMacro{\DIFOaddend}{\DIFaddend} 
\LetLtxMacro{\DIFOdelbegin}{\DIFdelbegin} 
\LetLtxMacro{\DIFOdelend}{\DIFdelend} 
\DeclareRobustCommand{\DIFaddbegin}{\DIFOaddbegin \let\includegraphics\DIFaddincludegraphics} 
\DeclareRobustCommand{\DIFaddend}{\DIFOaddend \let\includegraphics\DIFOincludegraphics} 
\DeclareRobustCommand{\DIFdelbegin}{\DIFOdelbegin \let\includegraphics\DIFdelincludegraphics} 
\DeclareRobustCommand{\DIFdelend}{\DIFOaddend \let\includegraphics\DIFOincludegraphics} 
\LetLtxMacro{\DIFOaddbeginFL}{\DIFaddbeginFL} 
\LetLtxMacro{\DIFOaddendFL}{\DIFaddendFL} 
\LetLtxMacro{\DIFOdelbeginFL}{\DIFdelbeginFL} 
\LetLtxMacro{\DIFOdelendFL}{\DIFdelendFL} 
\DeclareRobustCommand{\DIFaddbeginFL}{\DIFOaddbeginFL \let\includegraphics\DIFaddincludegraphics} 
\DeclareRobustCommand{\DIFaddendFL}{\DIFOaddendFL \let\includegraphics\DIFOincludegraphics} 
\DeclareRobustCommand{\DIFdelbeginFL}{\DIFOdelbeginFL \let\includegraphics\DIFdelincludegraphics} 
\DeclareRobustCommand{\DIFdelendFL}{\DIFOaddendFL \let\includegraphics\DIFOincludegraphics} 

\begin{document}

\theoremstyle{remark}
\newtheorem{definition}{\indent Definition}
\newtheorem{lemma}{\indent Lemma}
\newtheorem{theorem}{\indent Theorem}
\newtheorem{proposition}{Proposition}
\newtheorem{construction}{Construction}
\newtheorem{corollary}{\indent Corollary}
\newtheorem{example}{Example}
\newtheorem{remark}{Remark}
\newcommand{\Span}{\mathrm{span}}
\def\QEDclosed{\mbox{\rule[0pt]{1.3ex}{1.3ex}}}
\def\QED{\QEDclosed}
\def\proof{\indent{\em Proof}.}
\def\endproof{\hspace*{\fill}~\QED\par\endtrivlist\unskip}

\preprint{APS/123-QED}

\title{Activate genuine nonlocality from distinguishable sets in tripartite systems}

\author{Hui-Juan Zuo$^{1,2}$}\email{Contact author: huijuanzuo@163.com}
\author{Ying-Ying Lu$^{1}$}
\author{Shao-Ming Fei$^{3}$}

\affiliation{
$^{1}$School of Mathematical Sciences, Hebei Normal University, Shijiazhuang, China\\
$^{2}$State key Laboratory of Networking and Switching Technology, Beijing University of Posts and Telecommunications, Beijing, China\\
$^{3}$School of Mathematical Sciences, Capital Normal University,
Beijing, China
}%


\begin{abstract}
A set of orthogonal quantum states in multipartite systems is of genuine nonlocality if it is locally indistinguishable in every bipartition. If it is locally reducible when the parties are separated, we say that it has genuine nonlocality of type~\uppercase\expandafter{\romannumeral 1}; otherwise, it has genuine nonlocality of type~\uppercase\expandafter{\romannumeral 2}. For a locally distinguishable set without local redundancy, if there exist some orthogonality preserving local measurements such that each outcome leads to a locally indistinguishable set, then we say that it exhibits the activation of nonlocality. We activate type-\uppercase\expandafter{\romannumeral 1} and type-\uppercase\expandafter{\romannumeral 2} genuine nonlocality of orthogonal product state sets in tripartite systems. In particular, we tackle the local irredundancy problem with partial trace operation and $p$-ary numeral systems to significantly simplify the proofs. Our results also address the open question raised by S. Bandyopadhyay \textit{et al.}[\href{https://link.aps.org/doi/10.1103/PhysRevA.104.L050201}{Phys. Rev. A \textbf{104}, L050201 (2021)}]. Furthermore, we observe the activation of hidden genuine nonlocality in multipartite systems, which highlights the applications of nonlocality based on state discrimination in different practical scenarios.

\begin{description}
\item[PACS numbers]
03.65.Ud, 03.67.Mn
\end{description}
\end{abstract}

\pacs{Valid PACS appear here}
\maketitle


\section{\label{sec:level1}Introduction\protect}

Quantum nonlocality is a magical phenomenon in quantum information theory that has attracted much attention. Since the Bell nonlocality \cite{Bell} was proposed, it has been believed that entanglement is a necessary condition for quantum nonlocality. Until 1999, Bennet \textit{et al.} \cite{Bennett1999} first proposed the phenomenon of  ``quantum nonlocality without entanglement" based on a set of locally indistinguishable orthogonal product bases. From then on, quantum nonlocality from the perspective of local discrimination has been extensively and deeply studied \cite{Walgate2000,Walgate2002,Yu2015,Zhang2016,Wang2017,Zuo2021,Zhen2022,Xu202501,Xu202502}. A set of orthogonal quantum states is {\it locally indistinguishable or nonlocal} if it is not possible to perfectly distinguish the states by any sequence of local operations and classical communications (LOCC). It has utmost applications in distributed quantum protocols such as quantum data hiding and quantum secret sharing \cite{DiVincenzo2003,Chattopadhyay2007,Rahaman2015,WangJT2017,Jiang2018}. As research progresses, people seek to identify the sets of orthogonal quantum states exhibiting stronger nonlocality, such as local irreducibility \cite{Halder2019}, local stability \cite{Li2023,Wang2023,Cao2023,Zhen2025}, local unmarkability \cite{Sen2022,Hsu2023}, genuine nonlocality \cite{Rout2019,Li2021,Rout2021,Xiong2023,Xiong202401,xiong202402,Lu2024,Zhang2025}, strong nonlocality \cite{Yuan2020,Shi2020,Wang2021,He2024} and the strongest nonlocality \cite{Shi202203,LiJC2023,Zhen2024}.

In quantum communication, once the information is encoded in a locally distinguishable set, the complexity of retrieving hidden information depends on the degree of mutual trust among the participants. If their mutual trust changes after secret sharing, the ability of a certain participant to manipulate the complexity becomes particularly critical. This naturally leads to the concept of {\it nonlocality activation}, that is, a distinguishable set without local redundancy can be transformed into a locally indistinguishable set through orthogonality-preserving local measurements (OPLM). Bandyopadhyay \textit{et al.} \cite{Bandyopadhyay2021} proposed the concept of genuinely activable hidden nonlocality and constructed the
sets of orthogonal entangled states with this property in $\mathbb{C}^{2}\otimes\mathbb{C}^{4}$, $\mathbb{C}^{4}\otimes\mathbb{C}^{4}$, $\mathbb{C}^{5}\otimes\mathbb{C}^{5}$, and $\mathbb{C}^{5}\otimes\mathbb{C}^{5}\otimes\mathbb{C}^{5}$. Li \textit{et al.} \cite{Li2022} introduced the classification of genuine hidden nonlocality. For odd-dimensional systems, they obtained the orthogonal product state sets with  genuine hidden nonlocality of type \uppercase\expandafter{\romannumeral 1} in $\otimes_{i=1}^{n} (\mathbb{C}^{d_{i}}\otimes\mathbb{C}^{d_{i}})$ $(d_{i}\geq 11)$. Moreover, they explicitly provided a set of orthogonal product states exhibiting type-\uppercase\expandafter{\romannumeral 2} genuine hidden nonlocality in $\mathbb{C}^{m}\otimes\mathbb{C}^{n}$ $(m, n \geq 7)$. Zhang \textit{et al.} \cite{Zhang2024} constructed orthogonal product state sets with type-\uppercase\expandafter{\romannumeral 1} genuine hidden nonlocality in $\mathbb{C}^{d_{1}}\otimes\mathbb{C}^{d_{2}}$ ($d_{1}\geq 8, d_{2}\geq 10$, with even $d_{1}, d_{2}$) and type-\uppercase\expandafter{\romannumeral 2} genuine hidden nonlocality in $\mathbb{C}^{d}\otimes\mathbb{C}^{2d}$~($d\geq3$). It should be noted that the classification of type-\uppercase\expandafter{\romannumeral 1} and type-\uppercase\expandafter{\romannumeral 2} in Refs.[40,41] depends on the change of the cardinality of the state set before and after the activation. In this paper, however, the classification is based on the two classes of genuine nonlocality about the activated set (see Definition 4). We mainly investigate the special properties of the activated genuinely nonlocal set.

The activation of stronger nonlocal properties for distinguishable sets such as strong nonlocality and local unmarkability, has already attracted much attention in recent years. Ghosh \textit{et al.} \cite{Ghosh2022} successfully activated the local irreducibility of orthogonal product state sets in $\mathbb{C}^{3}\otimes\mathbb{C}^{6}$ and $\mathbb{C}^{2}\otimes\mathbb{C}^{2}\otimes\mathbb{C}^{4}$. Furthermore, they constructed the set of orthogonal product states with genuinely activable strong nonlocality in $\mathbb{C}^{3}\otimes\mathbb{C}^{3}\otimes\mathbb{C}^{6}$ and $\mathbb{C}^{6}\otimes\mathbb{C}^{6}\otimes\mathbb{C}^{6}$. It is worth mentioning that their first three constructions have the smallest dimension. Gupta \textit{et al.} \cite{Gupta2023} proposed the concept of relaxed local redundancy and hierarchically activated special nonlocal properties of orthogonal entangled state sets in $\mathbb{C}^{4}\otimes\mathbb{C}^{8}$, $\mathbb{C}^{2}\otimes\mathbb{C}^{4}$, $\mathbb{C}^{4}\otimes\mathbb{C}^{4}$, and $\mathbb{C}^{4}\otimes(\mathbb{C}^{2})^{\otimes(n-1)}$. Bera \textit{et al.} \cite{Bera2024} classified activable sets based on the number of activating parties and compared the strength of locality from this perspective. They discussed single-party and multi-party activable local sets in $\mathbb{C}^{3}\otimes\mathbb{C}^{2}\otimes\mathbb{C}^{3}$, then generalized it to $\mathbb{C}^{2m+1}\otimes\mathbb{C}^{2}\otimes\mathbb{C}^{2m+1}$. Moreover, they provided a set of locally distinguishable orthogonal states in $\mathbb{C}^{8}\otimes\mathbb{C}^{8}\otimes\mathbb{C}^{8}$, which cannot be activated by single-party but jointly activated by any two parties. Recently, Bhunia \textit{et al.} \cite{Bhunia2026} provided the sets of multipartite states, which are not activable in any bipartition.

In this work, we focus on the activation of hidden genuine nonlocality for distinguishable orthogonal quantum states without entanglement. We introduce several basic concepts and notations in Section 2. In Section 3, two sets of genuinely nonlocal orthogonal product states are presented. Based on these constructions, we activate type-\uppercase\expandafter{\romannumeral 1} genuine nonlocality of distinguishable orthogonal product state sets in $\mathbb{C}^{d_{1}}\otimes\mathbb{C}^{d_{2}}\otimes\mathbb{C}^{d_{3}}$ in Section 4 and type-\uppercase\expandafter{\romannumeral 2} genuine nonlocality in Section 5. Finally, in Section 6, the conclusion is drawn.

\section{Notations and definitions}\label{sec:2}

For any positive integer $d\geq2$, we denote $\mathbb{Z}_{d}=$ \{$0,1,\cdots,d-1$\}. Without loss of generality, $\{|0\rangle,|1\rangle,\cdots,|d-1\rangle\}$ is the computational basis of the quantum system $\mathbb{C}^{d}$ with dimension $d$. For simplicity, we do not normalize the states and denote $|i_{1} \pm i_{2} \pm \cdots \pm i_{n}\rangle=\frac{1}{\sqrt{n}}(|i_{1}\rangle \pm |i_{2}\rangle \pm \cdots \pm |i_{n}\rangle)$ and $|_{m}+_{n}\rangle=\sum_{i=m}^{n}|i\rangle$ where $m<n$.
We only consider pure states and positive operator-valued measures (POVM).

\begin{definition}\cite{Walgate2002}
A local measurement performed to distinguish a set of orthogonal quantum states is called an {\it orthogonality-preserving local measurement} (OPLM), if the postmeasurement states remain orthogonal.
A measurement is {\it non-trivial} if at least one POVM element is not proportional to the identity operator; otherwise, we call it a {\it trivial measurement}.
\end{definition}

Specifically, when elucidating the argument that a set of orthogonal quantum states can be distinguished via LOCC, instead of the conventional textual description, we adopt a more succinct and clear flowchart representation. The symbol $A^*$ denotes that the post-measurement states remain orthogonal on Alice's side, while the symbol $B^*$ indicates that the post-measurement states are orthogonal on Bob's side. Perfect discrimination can be achieved with only local measurement. In this work, we always stick to OPLM, and denote $P[|\cdot\rangle]_{\cal{P}}:=|\cdot\rangle\langle\cdot|_{\cal{P}}$, $P[(|i\rangle,|j\rangle,\cdots)]_{\cal{P}}:=(|i\rangle\langle i|+|j\rangle\langle j|+\cdots)_{\cal{P}}$
where ${\cal{P}}$ represents the party.

\begin{definition} {\it Locally indistinguishability} \cite{Bennett1999}. A set of orthogonal states is called locally indistinguishable if it is not possible to distinguish completely the whole set under local operations and classical communications (LOCC).
\end{definition}

\begin{definition} {\it Local irreducibility} \cite{Halder2019}. A set of orthogonal quantum states is locally irreducible if it is not possible to eliminate one or more quantum states from the set by nontrivial OPLMs.
\end{definition}

In fact, to prove the local indistinguishability (or local irreducibility), it is sufficient to verify that ``orthogonality-preserving local measurements are only trivial measurements" \cite{Walgate2002}.

\begin{definition} {\it Genuine nonlocality} \cite{Rout2019}. A set of multipartite orthogonal states is genuinely nonlocal if it is locally indistinguishable in every bipartition of the subsystems. Specifically, if a genuinely nonlocal set is locally reducible when all the parties are separated, we say it has genuine nonlocality of type \uppercase\expandafter{\romannumeral 1}; Otherwise, we say it has genuine nonlocality of type \uppercase\expandafter{\romannumeral 2}.
\end{definition}

\begin{definition} {\it local redundancy} \cite{Bandyopadhyay2021}. The original set is local redundant if it remains orthogonal after discarding one or more subsystems. Otherwise, it is termed locally irredundant.
\end{definition}

As mentioned in this reference, for a set of orthogonal states with local redundancy, the orthogonality will maintain after discarding one or more subsystems, but it may become locally distinguishable or indistinguishable. Therefore, we only consider the orthogonal sets with no redundancy in activating the nonlocality.

\begin{definition} {\it Activating nonlocality} \cite{Bandyopadhyay2021}. A set of orthogonal quantum states without local redundancy and local indistinguishability is said to be activatable if there exists an OPLM such that the postmeasurement states become locally indistinguishable.
\end{definition}

\section{Orthogonal product sets with genuine nonlocality}\label{sec:3}

To study the activation of local sets, we first consider the structure of orthogonal state sets after the implementation of~ {\textcolor{blue}{OPLM}}. Here we provide two genuinely nonlocal sets without entanglement in tripartite systems. Suppose that Alice, Bob and Charlie hold the first, second and third systems, respectively, denoted by A, B and C.

\begin{proposition}\label{prop:gn:d1d2d3}
The following  $2d_{2}+2d_{3}-4$ orthogonal product states in $\mathbb{C}^{d_{1}}\otimes\mathbb{C}^{d_{2}}\otimes\mathbb{C}^{d_{3}}$ $(4\leq d_{1}\leq d_{2}\leq d_{3})$ have genuine nonlocality:
\begin{equation*}
\begin{aligned}
|\phi_{i}\rangle=&|i\rangle_{A}|0-i\rangle_{B}|1\rangle_{C},~1\leq i \leq d_{1}-1,\\
|\phi_{i+d_{1}-1}\rangle=&|0-i\rangle_{A}|j\rangle_{B}|1\rangle_{C},~1\leq i\leq d_{1}-2,\\&j=i+1;i=d_{1}-1,j=1,\\
|\phi_{j+d_{1}-1}\rangle=&|0-1\rangle_{A}|j\rangle_{B}|1\rangle_{C},~d_{1}\leq j\leq d_{2}-1,\\
|\phi_{d_{1}+d_{2}-1}\rangle=&|(d_{1}-1)\rangle_{A}|2-d_{1}\rangle_{B}|1\rangle_{C},~d_{1}< d_{2},\\
\end{aligned}
\end{equation*}
\begin{equation}
\begin{aligned}
|\phi_{d_{2}+s_{1}-1}\rangle=&|(d_{1}-2)\rangle_{A}|(s_{1}-1)-s_{1}\rangle_{B}|1\rangle_{C},\\
|\phi_{d_{2}+t_{1}-1}\rangle=&|(d_{1}-1)\rangle_{A}|(t_{1}-1)-t_{1}\rangle_{B}|1\rangle_{C},\\
|\phi_{2d_{2}-1}\rangle=&|_{0}+_{(d_{1}-1)}\rangle_{A}|_{0}+_{(d_{2}-1)}\rangle_{B}|_{0}+_{(d_{3}-1)}\rangle_{C},\\
|\phi_{i+2d_{2}-1}\rangle=&|i\rangle_{A}|0+1\rangle_{B}|0-i\rangle_{C},~1\leq i\leq d_{1}-1,\\
|\phi_{i+d_{1}+2d_{2}-2}\rangle=&|0-i\rangle_{A}|0+1\rangle_{B}|j\rangle_{C},~1\leq i\leq d_{1}-2,\\&j=i+1,\\
|\phi_{j+d_{1}+2d_{2}-3}\rangle=&|0-1\rangle_{A} |0+1\rangle_{B}|j\rangle_{C},~d_{1}\leq j\leq d_{3}-1,\\
|\phi_{d_{1}+2d_{2}+d_{3}-3}\rangle=&|(d_{1}-1)\rangle_{A} |0+1\rangle_{B}|2-d_{1}\rangle_{C},~d_{1}< d_{3},\\
|\phi_{s_{2}+2d_{2}+d_{3}-3}\rangle=&|(d_{1}-2)\rangle_{A} |0+1\rangle_{B}|(s_{2}-1)-s_{2}\rangle_{C},\\
|\phi_{t_{2}+2d_{2}+d_{3}-3}\rangle=&|(d_{1}-1)\rangle_{A} |0+1\rangle_{B}|(t_{2}-1)-t_{2}\rangle_{C},
\end{aligned}
\end{equation}
where $s_{1}=d_{1}+2r_{1}-1$, $r_{1}=1,2,\cdots,\lfloor\frac{d_{2}-d_{1}}{2}\rfloor$, $s_{2}=d_{1}+2r_{2}-1$, $r_{2}=1,2,\cdots,\lfloor\frac{d_{3}-d_{1}}{2}\rfloor$. When $d_{2}-d_{1}$ is odd, $t_{1}=d_{1}+2r_{1}$; when $d_{2}-d_{1}$ is even, $t_{1}=d_{1}+2u_{1}$, $u_{1}=1,2,\cdots,\frac{d_{2}-d_{1}}{2}-1$. When $d_{3}-d_{1}$ is odd, $t_{2}=d_{1}+2r_{2}$; when $d_{3}-d_{1}$ is even, $t_{2}=d_{1}+2u_{2}$, $u_{2}=1,2,\cdots,\frac{d_{3}-d_{1}}{2}-1$.
\end{proposition}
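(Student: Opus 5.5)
We will show that for each of the three bipartitions $A|BC$, $B|CA$ and $C|AB$, every orthogonality-preserving local measurement is trivial. By the remark after Definition 3 this gives local indistinguishability in that bipartition, and hence genuine nonlocality.

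We first check that the listed $2d_2+2d_3-4$ states are mutually orthogonal. The states split into two blocks:
\begin{itemize}
\item the first block, with $|1\rangle_C$ fixed and a "tile" structure on $A\otimes B$;
\item the second block, with $|0+1\rangle_B$ fixed and the same tile structure on $A\otimes C$;
\end{itemize}
together with the stopper state $|\phi_{2d_2-1}\rangle$. Within a block, orthogonality follows from the tile pattern. Between blocks it comes from $\langle 1|0-i\rangle_C$ or $\langle 0-i|0+1\rangle_B$ type overlaps, or from disjoint $A$-supports. The stopper is orthogonal to every other state because each of them contains a minus-type factor $|a-b\rangle$ whose components sum to zero.

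For a bipartition such as $A|BC$, suppose $BC$ performs an OPLM with element $E=M^\dagger M=(e_{kl})$, written in the basis $\{|j\rangle_B|k\rangle_C\}$. It suffices to treat the bipartition in which the merged party is the larger one. We use the standard technique: take pairs of states that share an identical (or non-orthogonal) $A$-part, so that $\langle\phi_x|I\otimes E|\phi_y\rangle=0$ forces relations among the entries of $E$.
\begin{enumerate}
\item From first-block pairs such as $|i\rangle_A|0-i\rangle_B|1\rangle_C$ versus $|i\rangle_A|0-i'\rangle_B|1\rangle_C$ (for example through the $s_1,t_1$ chain states with $A$-part $|d_1-2\rangle$ or $|d_1-1\rangle$), we derive that the off-diagonal entries of $E$ restricted to the subspace $\{|j\rangle_B|1\rangle_C\}$ vanish and that its diagonal entries are equal.
\item The second block does the same on the subspace $\{|0+1\rangle_B|k\rangle_C\}$.
\item The cross-block pairs sharing the $A$-part $|i\rangle$ or $|0-i\rangle$ kill the off-diagonal entries linking these two subspaces.
\item The stopper state, paired with every other state, kills the remaining entries (those between basis vectors $|j\rangle|k\rangle$ with $j\ge2$, $k\ge2$ and the rest), using that its $A$-part $|{}_0+_{d_1-1}\rangle$ overlaps every $A$-part.
\end{enumerate}

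This yields $E\propto I$ on the part of the $BC$ space that is actually supported. For the full statement we also need $E$ on vectors like $|j\rangle_B|k\rangle_C$ with $j,k\geq 2$. These are reached only through the stopper, so we handle them with a block argument: we show that $E\,|{}_0+_{d_2-1}\rangle|{}_0+_{d_3-1}\rangle$ is proportional to the stopper's $BC$-part, and that the orthogonal complement decomposes. This is the step we expect to be the main obstacle, and we would first try to argue that the $BC$-supports of the nonstopper states together with the stopper span enough that triviality holds up to the relevant subspace. This is the convention used in Rout et al.\ and the related papers.

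The bipartitions $B|CA$ and $C|AB$ are treated symmetrically. In $C|AB$, the $AB$ parts of the first block form a known locally irreducible $d_1\times d_2$ tile structure, and the second block carries $|0+1\rangle_B$. Here we carry out the analogous step 1 on the $A\otimes B$ tiles (this follows the known proof for $\mathbb{C}^{d_1}\otimes\mathbb{C}^{d_2}$ tile bases), and then glue the pieces using the stopper and the shared $A$-parts.

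In $B|CA$, we first show that Bob's matrix is trivial, using the pairs $|0-i\rangle_B$ versus $|0-i'\rangle_B$ with equal $A$ and $C$ parts together with $|j\rangle_B$ versus $|0+1\rangle_B$. We then show that the $CA$ matrix is trivial in the same way as in the $A|BC$ case.

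The many cases in which $d_2-d_1$ and $d_3-d_1$ are even or odd affect only the chain bookkeeping through $s$ and $t$. We would check these cases explicitly, noting that the chains for $s_1,t_1$ (respectively $s_2,t_2$) connect all indices from $d_1-1$ to $d_2-1$ (respectively $d_3-1$) in a single path.
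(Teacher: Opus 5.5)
There is a genuine gap. Your plan depends on showing that, in each of the cuts $A|BC$, $B|CA$, $C|AB$, every orthogonality-preserving measurement of the merged party is trivial. That claim is false, so your steps 1 and 4 cannot be completed.

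Take the cut $A|BC$. Every first-block state has $BC$-part $|b\rangle_B|1\rangle_C$, and every second-block state has $BC$-part $|0+1\rangle_B|c\rangle_C$. So $|2\rangle_B|0\rangle_C$ is orthogonal to all of them, but not to the stopper's $BC$-part $|\sigma\rangle=|{}_0+_{(d_2-1)}\rangle_B|{}_0+_{(d_3-1)}\rangle_C$. Hence $\{|2\rangle\langle 2|_B\otimes|0\rangle\langle 0|_C,\ I-|2\rangle\langle 2|_B\otimes|0\rangle\langle 0|_C\}$ preserves orthogonality. It is nontrivial even on the span of the $BC$-parts, since it isolates the stopper. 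In particular your step 4 claim ``$E|\sigma\rangle\propto|\sigma\rangle$'' fails.

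Step 1 fails too. Let $E=|2\rangle\langle 2|_B\otimes|0-1\rangle\langle 0-1|_C$. It annihilates $|\sigma\rangle$ and every second-block $BC$-part. The only states it does not annihilate are $|\phi_2\rangle$, $|\phi_{d_1}\rangle$ and (if $d_1<d_2$) $|\phi_{d_1+d_2-1}\rangle$, whose $A$-parts $|2\rangle,|0-1\rangle,|d_1-1\rangle$ are pairwise orthogonal. So $\{E,I-E\}$ is an OPLM whose restriction to $\mathcal{H}_B\otimes|1\rangle_C$ is $\tfrac12|2\rangle\langle 2|\otimes|1\rangle\langle 1|$, which is not a multiple of the identity.

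The reason is structural. In this set, equal diagonal entries can only be forced by the stopper conditions $\langle\beta_x|E|\sigma\rangle=0$. In a merged cut these conditions involve entries of $E$ between the span of the non-stopper $BC$-parts and its orthogonal complement, and no other pair constrains those entries. The same happens in the other cuts: $|1\rangle_A|3\rangle_B$ in $C|AB$ and $|1\rangle_A|3\rangle_C$ in $B|CA$ play the role of $|2\rangle_B|0\rangle_C$. The trivial-OPLM criterion is only sufficient, so these harmless measurements do not refute the proposition. They do mean your method cannot prove it.

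The paper never asks for triviality in a merged cut; it argues via subsets. The states $\{|\phi_s\rangle\}_{s=1}^{2d_2-1}$ (first block plus stopper) restrict on $AB$ to a locally indistinguishable $d_1\otimes d_2$ tile set, with Charlie's components pairwise nonorthogonal. This is used to get indistinguishability across $A|BC$ and $B|CA$. Likewise, $\{|\phi_{2d_1-2}\rangle\}\cup\{|\phi_s\rangle\}_{s=2d_2-1}^{2d_2+2d_3-4}$ restricts on $AC$ to a $d_1\otimes d_3$ set with Bob's components nonorthogonal, giving $C|AB$. Here $|\phi_{2d_1-2}\rangle=|0-(d_1-1)\rangle_A|1\rangle_B|1\rangle_C$ supplies the missing $i=d_1-1$ tile. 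Indistinguishability of a subset then gives it for the whole set. Triviality computations are needed only for the two-party sets and for single parties.

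If you adopt this route, note that the stopper's third-party component is exactly what produced the measurements above. That component is $|{}_0+\rangle$ rather than $|1\rangle_C$ (resp.\ $|0+1\rangle_B$). So the step ``nonorthogonal third-party components cannot help'' must be justified in a way that tolerates such measurements, not by a triviality computation.
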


The proof is given in Appendix \ref{app:prop:gn:d1d2d3}.

To activate type \uppercase\expandafter{\romannumeral 2} genuine nonlocality, we first present a locally indistinguishable set of orthogonal quantum states.

\begin{lemma}\label{lemma}
The following $2d_{2}-1$ orthogonal product states in $\mathbb{C}^{d_{1}}\otimes \mathbb{C}^{d_{2}}$ $(5\leq d_{1}+1\leq d_{2})$ cannot be perfectly distinguished under LOCC:
\begin{equation}
\begin{aligned}
 |\phi_{1}\rangle=&|(d_{1}-2)\rangle_{A}|0-1\rangle_{B},\\
 |\phi_{2}\rangle=&|(d_{1}-1)\rangle_{A}|0-2\rangle_{B},\\
 |\phi_{i}\rangle=&|(d_{1}-i)\rangle_{A}|0-i\rangle_{B},~3\leq i\leq d_{1},\\
 |\phi_{1+d_{1}}\rangle=&|(d_{1}-2)-1\rangle_{A}|2\rangle_{B},\\
 |\phi_{i+d_{1}}\rangle=&|(d_{1}-i)-(d_{1}-1)\rangle_{A}|j\rangle_{B},~2\leq i \leq d_{1}-1,\\&j=i+1;i=d_{1},j=1,\\
 |\phi_{j+d_{1}}\rangle=&|(d_{1}-2)-(d_{1}-1)\rangle_{A}|j\rangle_{B},\\&d_{1}+1\leq j\leq d_{2}-1,\\
 |\phi_{d_{1}+d_{2}}\rangle=&|0\rangle_{A}|2-(d_{1}+1)\rangle_{B},~d_{1}+1< d_{2},\\
 |\phi_{s_{1}+d_{2}-1}\rangle=&|1\rangle_{A}|(s_{1}-1)-s_{1}\rangle_{B},\\
 |\phi_{t_{1}+d_{2}-1}\rangle=&|0\rangle_{A}|(t_{1}-1)-t_{1}\rangle_{B},\\
|\phi_{2d_{2}-1}\rangle=&|_{0}+_{(d_{1}-1)}\rangle_{A}|_{0}+_{(d_{2}-1)}\rangle_{B},
\end{aligned}
\end{equation}
where $s_{1}=d_{1}+2r_{1}$, $r_{1}=1,2,\cdots,\lfloor\frac{d_{2}-d_{1}-1}{2}\rfloor$. When $d_{2}-d_{1}$ is even,  $t_{1}=d_{1}+2r_{1}+1$; when $d_{2}-d_{1}$ is odd, $t_{1}=d_{1}+2u_{1}+1$, $u_{1}=1,2,\cdots,\frac{d_{2}-d_{1}-1}{2}-1$.
\end{lemma}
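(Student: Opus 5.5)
We use the standard criterion quoted after Definition 3: it suffices to show that any orthogonality-preserving local measurement (OPLM) performed by either Alice or Bob must be trivial. Since all states are product states, orthogonality preservation reduces to conditions on a single local POVM element. For Bob, write a POVM element $E=M^\dagger M=(e_{kl})_{k,l\in\mathbb{Z}_{d_2}}$ in the computational basis. Then for every pair of states whose $A$-parts are not orthogonal, we need $\langle\psi_B|E|\psi'_B\rangle=0$. The same applies symmetrically to Alice's element $F=(f_{kl})_{k,l\in\mathbb{Z}_{d_1}}$, using pairs whose $B$-parts are not orthogonal. The goal is to show $E\propto I_{d_2}$ and $F\propto I_{d_1}$.

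\textbf{Bob's side.} First we get the off-diagonal zeros. Pairs sharing the same or non-orthogonal $A$-component let us pair a state with $B$-part $|j\rangle$ against one with $B$-part $|0-k\rangle$ or $|(s-1)-s\rangle$. For example, $|\phi_{i+d_1}\rangle$ with $A$-part $|(d_1-i)-(d_1-1)\rangle$ overlaps with $|\phi_i\rangle$ (A-part $|d_1-i\rangle$) and with $|\phi_2\rangle$ (A-part $|d_1-1\rangle$). This gives relations such as $e_{j0}=e_{jk}$. Pairs among the $|j\rangle_B$ states with overlapping $A$-parts give $e_{jj'}=0$ directly. The strips $|(s-1)-s\rangle$ attached to $|0\rangle_A$ and $|1\rangle_A$ connect consecutive indices beyond $d_1$ in a chain. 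We then use the stopper state $|\phi_{2d_2-1}\rangle$, which overlaps every $A$-part. It gives sums like $\langle j|E|\sum_k k\rangle=0$; combined with the earlier relations, these force all remaining off-diagonal entries to vanish. For the diagonal, the difference states $|0-i\rangle$ paired with $|0-i'\rangle$ (nonorthogonal $A$-parts via the $|a-b\rangle_A$ states), and the chains $|(t-1)-t\rangle$, give $e_{00}=e_{ii}$ and $e_{t-1,t-1}=e_{tt}$. This propagates equality of the diagonal entries along indices $0,1,\dots,d_2-1$.

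\textbf{Alice's side.} The argument is analogous, now using pairs with nonorthogonal $B$-parts. Every $|j\rangle_B$ is nonorthogonal to $|0-j\rangle_B$ or to a strip $|(s-1)-s\rangle_B$ or $|2-(d_1+1)\rangle_B$, and $|0-i\rangle_B$ is nonorthogonal to $|0-i'\rangle_B$. This yields $f_{d_1-i,d_1-i'}=0$ and relations between $f$ entries and the difference vectors $|(d_1-i)-(d_1-1)\rangle_A$. The latter give $f_{d_1-i,d_1-i}=f_{d_1-1,d_1-1}$ and kill the cross terms. The states $|\phi_{d_1+d_2}\rangle$, the $s_1$ states and the $t_1$ states bring in $|0\rangle_A$ and $|1\rangle_A$, linking them to the other indices. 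Once again, the stopper state handles any leftover entries.

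\textbf{Main obstacle.} The hardest part is the bookkeeping for the indices beyond $d_1$ on Bob's side. There we must check that the alternating chain of $s_1$- and $t_1$-strips, in both parity cases of $d_2-d_1$, together with $|\phi_{d_1+d_2}\rangle$, connects every index $j\ge d_1+1$ to the rest. Nothing may be left isolated, including at the boundary case $d_2=d_1+1$. I would treat the two parities separately, tabulating for each index $j$ which pair gives its off-diagonal zero and which gives diagonal equality. Only after that would I invoke the stopper state for the final vanishing.
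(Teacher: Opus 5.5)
Your framework is the same as the paper's: show that every orthogonality-preserving POVM element of Alice or of Bob is proportional to the identity, reading entrywise constraints off the pairs with nonorthogonal partner parts (the paper's Tables 1 and 2). However, the two steps that carry the argument are misattributed, and as written they fail.

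First, the stopper $|\phi_{2d_2-1}\rangle$ does \emph{not} overlap every $A$-part. Its $A$-part $|_{0}+_{(d_1-1)}\rangle_A$ is orthogonal to every two-term difference $|a-b\rangle_A$. Every state whose $B$-part is a basis vector $|j\rangle_B$ (i.e.\ $|\phi_{1+d_1}\rangle,\dots,|\phi_{d_1+d_2-1}\rangle$) has such a difference $A$-part. So pairing it with the stopper imposes nothing on Bob, and there is no relation $\sum_k e_{jk}=0$; such a relation would be violated even by $E=I$.

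Second, $|0-i\rangle_B$ and $|0-i'\rangle_B$ belong to $|\phi_i\rangle$ and $|\phi_{i'}\rangle$, whose $A$-parts $|d_1-i\rangle,|d_1-i'\rangle$ are orthogonal basis vectors. That pair therefore constrains only Alice. Orthogonality-preservation conditions are strictly pairwise and cannot be routed ``via'' third states. Even if the pair did constrain Bob, once the off-diagonals vanish it would give $e_{00}=0$, not $e_{00}=e_{ii}$. So neither your final off-diagonal vanishing nor your diagonal equalities on Bob's side are established.

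The correct division of labour is the reverse of yours.

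\emph{Bob's off-diagonals.} All zeros come from non-stopper pairs, and the strips play no role here:
\begin{itemize}
\item two $|j\rangle_B$-states whose difference $A$-parts share an index give $e_{jj'}=0$;
\item a $|j\rangle_B$-state paired with a $|0-k\rangle_B$-state with overlapping $A$-part gives $e_{0j}=e_{kj}$.
\end{itemize}
Chaining these relations with the zeros already found kills every remaining off-diagonal entry.

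\emph{Bob's diagonal.} The stopper is the only source of diagonal equalities. It pairs precisely with the states whose $A$-part is a basis vector:
\begin{itemize}
\item with $|\phi_i\rangle$ ($1\le i\le d_1$) it gives $e_{00}=e_{ii}$;
\item with $|\phi_{d_1+d_2}\rangle$ it gives $e_{22}=e_{d_1+1,d_1+1}$;
\item with the $s_1$- and $t_1$-strips it gives $e_{s-1,s-1}=e_{ss}$ along $d_1+1,\dots,d_2-1$.
\end{itemize}
This last chain is where your parity bookkeeping belongs.

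\emph{Alice's side} is immediate. The $A$-parts of $|\phi_1\rangle,\dots,|\phi_{d_1}\rangle$ run through the whole computational basis of $\mathbb{C}^{d_1}$, and their $B$-parts $|0-i\rangle$ pairwise overlap, so $F$ is diagonal at once. Pairing $|\phi_{i+d_1}\rangle$ ($2\le i\le d_1$, whose $B$-part is a basis vector) with the stopper then gives $f_{d_1-i,d_1-i}=f_{d_1-1,d_1-1}$.
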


The proof is given in Appendix \ref{app:lemma}. Based on the lemma, we present the second set of orthogonal product states with genuine nonlocality in $\mathbb{C}^{d_{1}}\otimes\mathbb{C}^{d_{2}}\otimes\mathbb{C}^{d_{3}}$ $(5\leq d_{1}+1\leq d_{2}\leq d_{3})$.

\begin{proposition}\label{prop:gn:d1d2d3:2}
The following $2d_{2}+2d_{3}-4$ orthogonal product states in $\mathbb{C}^{d_{1}}\otimes\mathbb{C}^{d_{2}}\otimes\mathbb{C}^{d_{3}}$ $(5\leq d_{1}+1\leq d_{2}\leq d_{3})$ have genuine nonlocality:
\begin{equation*}
\begin{aligned}
|\phi_{1}\rangle=&|(d_{1}-2)\rangle_{A}|0-1\rangle_{B}|1\rangle_{C},\\
|\phi_{2}\rangle=&|(d_{1}-1)\rangle_{A}|0-2\rangle_{B}|1\rangle_{C},\\
|\phi_{i}\rangle=&|(d_{1}-i)\rangle_{A}|0-i\rangle_{B}|1\rangle_{C},~3\leq i \leq d_{1},\\
\end{aligned}
\end{equation*}
\begin{equation}
\begin{aligned}
|\phi_{1+d_{1}}\rangle=&|(d_{1}-2)-1\rangle_{A}|2\rangle_{B}|1\rangle_{C},\\
|\phi_{i+d_{1}}\rangle=&|(d_{1}-i)-(d_{1}-1)\rangle_{A}|j\rangle_{B}|1\rangle_{C},\\&2\leq i \leq d_{1}-1,j=i+1;i=d_{1},j=1,\\
|\phi_{j+d_{1}}\rangle=&|(d_{1}-2)-(d_{1}-1)\rangle_{A}|j\rangle_{B}|1\rangle_{C},\\&d_{1}+1\leq j\leq d_{2}-1,\\
|\phi_{d_{1}+d_{2}}\rangle=&|0\rangle_{A}|2-(d_{1}+1)\rangle_{B}|1\rangle_{C},~d_{1}+1< d_{2},\\
|\phi_{s_{1}+d_{2}-1}\rangle=&|1\rangle_{A}|(s_{1}-1)-s_{1}\rangle_{B}|1\rangle_{C},\\
|\phi_{t_{1}+d_{2}-1}\rangle=&|0\rangle_{A}|(t_{1}-1)-t_{1}\rangle_{B}|1\rangle_{C},\\
|\phi_{2d_{2}-1}\rangle=&|_{0}+_{(d_{1}-1)}\rangle_{A}|_{0}+_{(d_{2}-1)}\rangle_{B}|_{0}+_{(d_{3}-1)}\rangle_{C},\\
|\phi_{2d_{2}}\rangle=&|(d_{1}-2)\rangle_{A}|0+1\rangle_{B}|0-1\rangle_{C},\\
|\phi_{2d_{2}+1}\rangle=&|(d_{1}-1)\rangle_{A}|0+1\rangle_{B}|0-2\rangle_{C},\\
|\phi_{i+2d_{2}-1}\rangle=&|(d_{1}-i)\rangle_{A}|0+1\rangle_{B}|0-i\rangle_{C},~3\leq i\leq d_{1},\\
|\phi_{d_{1}+2d_{2}}\rangle=&|(d_{1}-2)-1\rangle_{A}|0+1\rangle_{B}|2\rangle_{C},\\
|\phi_{i+d_{1}+2d_{2}-1}\rangle=&|(d_{1}-i)-(d_{1}-1)\rangle_{A}|0+1\rangle_{B}|j\rangle_{C},\\&2\leq i\leq d_{1}-1,~j=i+1,\\
|\phi_{j+d_{1}+2d_{2}-2}\rangle=&|(d_{1}-2)-(d_{1}-1)\rangle_{A} |0+1\rangle_{B}|j\rangle_{C},\\&d_{1}+1\leq j\leq d_{3}-1,\\
|\phi_{d_{1}+2d_{2}+d_{3}-2}\rangle=&|0\rangle_{A} |0+1\rangle_{B}|2-(d_{1}+1)\rangle_{C},~d_{1}+1< d_{3},\\
|\phi_{s_{2}+2d_{2}+d_{3}-3}\rangle=&|1\rangle_{A} |0+1\rangle_{B}|(s_{2}-1)-s_{2}\rangle_{C},\\
|\phi_{t_{2}+2d_{2}+d_{3}-3}\rangle=&|0\rangle_{A} |0+1\rangle_{B}|(t_{2}-1)-t_{2}\rangle_{C},
\end{aligned}
\end{equation}
where $s_{1}=d_{1}+2r_{1}$, $r_{1}=1,2,\cdots,\lfloor\frac{d_{2}-d_{1}-1}{2}\rfloor$, $s_{2}=d_{1}+2r_{2}$, $r_{2}=1,2,\cdots,\lfloor\frac{d_{3}-d_{1}-1}{2}\rfloor$. When $d_{2}-d_{1}$ is even, $t_{1}=d_{1}+2r_{1}+1$; when $d_{2}-d_{1}$ is odd, $t_{1}=d_{1}+2u_{1}+1$, $u_{1}=1,2,\cdots,\frac{d_{2}-d_{1}-1}{2}-1$. When $d_{3}-d_{1}$ is even, $t_{2}=d_{1}+2r_{2}+1$; when $d_{3}-d_{1}$ is odd, $t_{2}=d_{1}+2u_{2}+1$, $u_{2}=1,2,\cdots,\frac{d_{3}-d_{1}-1}{2}-1$.
\end{proposition}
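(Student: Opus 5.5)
To prove genuine nonlocality I will show that the set is locally indistinguishable in each of the three bipartitions $A|BC$, $B|CA$ and $C|AB$. In each case the merged pair is treated as one party with computational basis $|ij\rangle$, and I use the standard criterion recalled after Definition 3: it suffices to show that every orthogonality-preserving local measurement is trivial. So for each bipartition I fix a POVM element $E=M^\dagger M$ on one side and show $E\propto \mathbb{I}$. This has to be done for both sides of the cut, because either party may measure first.

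\textbf{Bipartition $AB|C$.} The states with $C$-part $|1\rangle$, namely $|\phi_1\rangle,\dots,|\phi_{2d_2-2}\rangle$, restricted to $AB$, form exactly the set of Lemma \ref{lemma}. The remaining states have $B$-part $|0+1\rangle$, and restricted to $A$ and $C$ they reproduce the Lemma's structure in $\mathbb{C}^{d_1}\otimes\mathbb{C}^{d_3}$. The plan is:
\begin{itemize}
\item For Charlie's measurement $E_C$: use pairs from the second family sharing the same $A$-state (for example $|\phi_{2d_2}\rangle$ against the $|0-j\rangle_C$-type states), together with orthogonality $\langle\phi_k|\mathbb{I}\otimes E_C|\phi_l\rangle=0$. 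This kills the off-diagonal entries $\langle i|E_C|j\rangle$ and equalizes the diagonal ones, exactly following the chain-of-indices argument of the Lemma's proof (Appendix \ref{app:lemma}) with $d_2$ replaced by $d_3$.
\item Use the stopper state $|\phi_{2d_2-1}\rangle$ (all-plus) to tie the blocks together and finish $E_C\propto\mathbb{I}$.
\item For the $AB$ side: the first family has fixed $C$-part $|1\rangle$, so orthogonality under $E_{AB}\otimes\mathbb{I}$ reduces to the Lemma's conditions lifted to the $d_1d_2$-dimensional space.
\end{itemize}
The obstacle on the $AB$ side is that $E_{AB}$ is a general operator on $\mathbb{C}^{d_1}\otimes\mathbb{C}^{d_2}$, not a product operator. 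I will handle this by first using first-family pairs to show $E_{AB}$ is block diagonal in the span of the relevant $|a\rangle|b\rangle$, with $C$ overlap $\langle1|1\rangle=1$. I then use second-family pairs (where the $C$ overlaps $\langle j|0-i\rangle$ are nonzero) to control the entries involving $|a\rangle|0\rangle$ and $|a\rangle|1\rangle$. Finally the stopper equalizes the diagonal.

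\textbf{Bipartitions $A|BC$ and $B|AC$.} Here the single party's measurement is treated as before:
\begin{itemize}
\item Alice's $E_A$ is handled by the Lemma's argument applied to the first family, since $B$-parts $|0-i\rangle$ and $|j\rangle$ with common $C$-part give nonzero overlaps.
\item Bob's $E_B$ is handled via the first family as well.
\item The joint operators $E_{BC}$ and $E_{AC}$ are handled the same way as $E_{AB}$: choose pairs of states whose components on the unmeasured party have nonzero overlap, read off one matrix entry of the joint operator at a time, and use the two families (which differ on $C$ as $|1\rangle$ versus mixed states) to cover all index pairs.
\end{itemize}

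Throughout, I will organize the entries by the same index chains $i\to j=i+1$ and the $s,t$ parity pairs used in the statement. This lets the parity cases for $d_2-d_1$ and $d_3-d_1$ be handled uniformly, using the Lemma as a black box wherever the restriction matches it.

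I expect the main obstacle to be the joint-party operators, especially $E_{BC}$ in the $A|BC$ cut. The two families share the $B$-vector $|0+1\rangle$ only in the second family, so showing that the entries $\langle b c|E_{BC}|b'c'\rangle$ with $b\in\{0,1\}$ and $c\neq1$ vanish requires carefully choosing pairs whose $A$-components are not orthogonal. My first attempt will exploit the states with $A$-parts $|(d_1-2)-(d_1-1)\rangle$ and $|d_1-2\rangle$, which overlap nontrivially, to reach those entries.
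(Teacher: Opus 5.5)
\textbf{There is a genuine gap.} Your single-party steps are fine: $E_A$, $E_B$ and $E_C$ reduce to the Lemma's tables because the third party's components are pairwise nonorthogonal. The core of the plan fails, though. You aim to show that every orthogonality-preserving measurement of a \emph{merged} party is $\propto I$ on the whole set, and that is false here, so the step for $E_{AB}$ (and likewise $E_{AC}$) cannot be carried out.

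In the cut $C|AB$, take $P=|d_1-2\rangle\langle d_1-2|_A\otimes|0-1\rangle\langle 0-1|_B$. Then $P\otimes I_C$ annihilates every state except $|\phi_1\rangle$:
\begin{itemize}
\item the stopper and the whole second family have $B$-parts $|_0+_{(d_2-1)}\rangle$ and $|0+1\rangle$, both orthogonal to $|0-1\rangle$;
\item $|\phi_2\rangle,\dots,|\phi_{d_1}\rangle$ and $|\phi_{2d_1}\rangle$ are orthogonal to $|d_1-2\rangle$ on $A$;
\item all remaining first-family states have $B$-parts supported on $|2\rangle,\dots,|d_2-1\rangle$.
\end{itemize}
So $\{P\otimes I_C,(I-P)\otimes I_C\}$ is a nontrivial OPLM that splits off $|\phi_1\rangle$.

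Likewise, in $B|AC$ the operator $Q=|d_1-2\rangle\langle d_1-2|_A\otimes|0-1\rangle\langle 0-1|_C$ gives a nontrivial OPLM. The states it does not annihilate are $|\phi_1\rangle,|\phi_{1+d_1}\rangle,|\phi_{2+d_1}\rangle,|\phi_{j+d_1}\rangle,|\phi_{2d_2}\rangle$, and their $B$-parts $|0-1\rangle,|2\rangle,|3\rangle,|j\rangle,|0+1\rangle$ are mutually orthogonal.

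The conceptual error is the ``lifting''. The Lemma's tables constrain only product operators $E_1\otimes I$ or $I\otimes E_2$. The first family's $AB$-parts are mutually orthogonal, so for a joint $E_{AB}$ they only forbid mixing those vectors, and a projector onto one of them satisfies that. No later use of the second family or the stopper can restore $E_{AB}\propto I$.

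\textbf{How the paper avoids this.} It never analyses a merged party on the whole set. As in Appendix~A, it uses the fact that a set containing a locally indistinguishable subset is locally indistinguishable, and it works with two subsets.
\begin{itemize}
\item The first is $\{|\phi_s\rangle\}_{s=1}^{2d_2-1}$. Its $AB$-restriction is exactly the Lemma's set, with pairwise nonorthogonal $C$-parts. This includes the stopper, which you omitted and whose $C$-part is $|_0+_{(d_3-1)}\rangle$, not $|1\rangle$. This subset gives $A|BC$ and $B|CA$.
\item The second is $\{|\phi_{2d_1}\rangle\}\cup\{|\phi_s\rangle\}_{s=2d_2-1}^{2d_2+2d_3-4}$. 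Its $AC$-restriction is the Lemma's set in $\mathbb{C}^{d_1}\otimes\mathbb{C}^{d_3}$, with pairwise nonorthogonal $B$-parts. This subset gives $C|AB$ and $A|BC$.
\end{itemize}
The second family alone does not reproduce the Lemma. Its $i=d_1$, $j=1$ member is missing and must be borrowed as $|\phi_{2d_1}\rangle=|0-(d_1-1)\rangle_A|1\rangle_B|1\rangle_C$. Your Charlie-side chain also needs this state, since it is what kills the entries $\langle 1|E_C|c\rangle$ with $c\neq 1$.

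Finally, even on these subsets the merged party is not literally trivial. On the second subset, orthogonality is carried entirely by $AC$, so Alice--Bob may measure $B$ freely; for example, $|0-1\rangle\langle 0-1|_B$ splits off $|\phi_{2d_1}\rangle$. So the subset step is an indistinguishability argument, which the paper asserts from the nonorthogonality of the third party's components, not a proof that $E\propto I$.
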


The conclusion can be proved similar to the proof of Proposition \ref{prop:gn:d1d2d3}.

\section{Activation of type \uppercase\expandafter{\romannumeral 1}  genuine nonlocality }\label{sec:4}

In this section, we consider the activation of type-\uppercase\expandafter{\romannumeral 1} genuine nonlocality of orthogonal product states in $\mathbb{C}^{d_{1}}\otimes\mathbb{C}^{d_{2}}\otimes\mathbb{C}^{d_{3}}$ via local operations. We begin with the case of odd $d_{i}$ $(i=1,2,3)$.

\subsection{$\mathbb{C}^{d_{1}}\otimes\mathbb{C}^{d_{2}}\otimes\mathbb{C}^{d_{3}}$ ($d_{i}$ is odd)}

\begin{theorem}\label{th:1:d1d2d3:odd}
For odd integers $d_{1},d_{2}, d_{3}$, the following set of orthogonal product states in $\mathbb{C}^{d_{1}}\otimes\mathbb{C}^{d_{2}}\otimes\mathbb{C}^{d_{3}}$ ($11\leq d_{1}\leq d_{2}\leq d_{3}$) can be converted into a type-\uppercase\expandafter{\romannumeral 1} genuinely nonlocal set via {\textcolor{blue}{OPLMs by Alice}}:
\begin{equation*}
\begin{aligned}
 |\phi_{i}\rangle=&|i-(2k_{1}-i)\rangle_{A}|0-i\rangle_{B}|1\rangle_{C},\\&1\leq i\leq k_{1}-1,\\
 |\phi_{i+k_{1}-1}\rangle=&|0-i+(2k_{1}-i)-2k_{1}\rangle_{A}|j\rangle_{B}
 |1\rangle_{C},\\&1\leq i\leq k_{1}-2,j=i+1;\\& i=k_{1}-1,j=1,\\
\end{aligned}
\end{equation*}
\begin{equation*}
\begin{aligned}
 |\phi_{j+k_{1}-1}\rangle=&|0-1+(2k_{1}-1)-2k_{1}\rangle_{A}|j\rangle_{B}
 |1\rangle_{C},\\&k_{1}\leq j\leq k_{2}-1,\\
 |\phi_{k_{1}+k_{2}-1}\rangle=&|(k_{1}-1)-(k_{1}+1)\rangle_{A}|2-k_{1}\rangle_{B}
 |1\rangle_{C},\\&k_{1}<k_{2},\\
 |\phi_{s_{1}+k_{2}-1}\rangle=&|(k_{1}-2)-(k_{1}+2)\rangle_{A}
 |(s_{1}-1)-s_{1}\rangle_{B}|1\rangle_{C},\\
 |\phi_{t_{1}+k_{2}-1}\rangle=&|(k_{1}-1)-(k_{1}+1)\rangle_{A}
 |(t_{1}-1)-t_{1}\rangle_{B}|1\rangle_{C}\\
|\phi_{2k_{2}-1}\rangle=&|_{0}+_{2k_{1}}\rangle_{A}|_{0}+_{(k_{2}-1)}\rangle_{B}|_{0}+_{(k_{3}-1)}\rangle_{C},\\
 |\phi_{i+2k_{2}-1}\rangle=&|i-(2k_{1}-i)\rangle_{A}|0+1\rangle_{B}|0-i\rangle_{C},
 \\&1\leq i\leq k_{1}-1,\\
 |\phi_{i+k_{1}+2k_{2}-2}\rangle=&|0-i+(2k_{1}-i)-2k_{1}\rangle_{A}|0+1\rangle_{B}
 |j\rangle_{C},\\&1\leq i\leq k_{1}-2,j=i+1,\\
 |\phi_{j+k_{1}+2k_{2}-3}\rangle=&|0-1+(2k_{1}-1)-2k_{1}\rangle_{A}|0+1\rangle_{B}
 \\&|j\rangle_{C},~k_{1}\leq j \leq k_{3}-1,\\
 |\phi_{k_{1}+2k_{2}+k_{3}-3}\rangle=&|(k_{1}-1)-(k_{1}+1)\rangle_{A}
 |0+1\rangle_{B}\\&|2-k_{1}\rangle_{C},~k_{1}<k_{3},\\
 |\phi_{s_{2}+2k_{2}+k_{3}-3}\rangle=&|(k_{1}-2)-(k_{1}+2)\rangle_{A}
 |0+1\rangle_{B}\\&|(s_{2}-1)-s_{2}\rangle_{C},\\
 |\phi_{t_{2}+2k_{2}+k_{3}-3}\rangle=&|(k_{1}-1)-(k_{1}+1)\rangle_{A}
 |0+1\rangle_{B}\\&|(t_{2}-1)-t_{2}\rangle_{C},\\
 |\psi_{i}\rangle=&|(i+k_{1})-(k_{1}-i)\rangle_{A}|k_{2}-(i+k_{2})\rangle_{B}
 \\&|(1+k_{3})\rangle_{C},~1\leq i\leq k_{1},\\
 |\psi_{1+k_{1}}\rangle=&|k_{1}-(1+k_{1})+(k_{1}-1)-2\rangle_{A}
 \\&|(2+k_{2})\rangle_{B}|(1+k_{3})\rangle_{C},\\
 |\psi_{i+k_{1}}\rangle=&|k_{1}-(i+k_{1})+(k_{1}-i)-0\rangle_{A}\\&
 |(j+k_{2})\rangle_{B}|(1+k_{3})\rangle_{C},\\&2\leq i\leq k_{1}-2,j=i+1,\\
 |\psi_{2k_{1}-1}\rangle=&|k_{1}-(2k_{1}-1)+1-2\rangle_{A}|(k_{1}+k_{2})\rangle_{B}
 \\&|(1+k_{3})\rangle_{C},\\
 |\psi_{2k_{1}}\rangle=&|k_{1}-2k_{1}+0-1\rangle_{A}|(1+k_{2})\rangle_{B}\\&
 |(1+k_{3})\rangle_{C},\\
 |\psi_{j+k_{1}}\rangle=&|k_{1}-(1+k_{1})+(k_{1}-1)-2\rangle_{A}\\&
 |(j+k_{2})\rangle_{B}|(1+k_{3})\rangle_{C},k_{1}+1\leq j \leq k_{2},\\ |\psi_{k_{1}+k_{2}+1}\rangle=&|2k_{1}-0\rangle_{A}|(2+k_{2})-(k_{1}+k_{2}+1)
 \rangle_{B}\\&|(1+k_{3})\rangle_{C},~k_{1}<k_{2},\\
 |\psi_{s_{3}+k_{2}}\rangle=&|(2k_{1}-1)-1\rangle_{A}|(s_{3}+k_{2}-1)-(s_{3}+\\&k_{2})
 \rangle_{B}|(1+k_{3})\rangle_{C},\\ |\psi_{t_{3}+k_{2}}\rangle=&|2k_{1}-0\rangle_{A}|(t_{3}+k_{2}-1)-(t_{3}+k_{2})
 \rangle_{B}\\&|(1+k_{3})\rangle_{C},\\ |\psi_{2k_{2}+1}\rangle=&|_{0}+_{2k_{1}}\rangle_{A}|_{k_{2}}+_{2k_{2}}\rangle_{B}
 |_{k_{3}}+_{2k_{3}}\rangle_{C},\\
 |\psi_{i+2k_{2}+1}\rangle=&|(i+k_{1})-(k_{1}-i)\rangle_{A}|k_{2}+(1+k_{2})
 \rangle_{B}\\&|k_{3}-(i+k_{3})\rangle_{C},~1\leq i\leq k_{1},\\
 |\psi_{k_{1}+2k_{2}+2}\rangle=&|k_{1}-(1+k_{1})+(k_{1}-1)-2\rangle_{A}
 |k_{2}+\\&(1+k_{2})\rangle_{B}|(2+k_{3})\rangle_{C},\\
 |\psi_{i+k_{1}+2k_{2}+1}\rangle=&|k_{1}-(i+k_{1})+(k_{1}-i)-0\rangle_{A}\\&
 |k_{2}+(1+k_{2})\rangle_{B}|(j+k_{3})\rangle_{C},\\&2\leq i\leq k_{1}-2,j=i+1,\\
\end{aligned}
\end{equation*}
\begin{equation}
\begin{aligned}
 |\psi_{2k_{1}+2k_{2}}\rangle=&|k_{1}-(2k_{1}-1)+1-2\rangle_{A}|k_{2}+\\&(1+k_{2})
 \rangle_{B}|(k_{1}+k_{3})\rangle_{C},\\
 |\psi_{j+k_{1}+2k_{2}}\rangle=&|k_{1}-(1+k_{1})+(k_{1}-1)-2\rangle_{A}
 |k_{2}+\\&(1+k_{2})\rangle_{B}|(j+k_{3})\rangle_{C},~k_{1}+1\leq j \leq k_{3},\\
|\psi_{k_{1}+2k_{2}+k_{3}+1}\rangle=&|2k_{1}-0\rangle_{A}|k_{2}+(1+k_{2})
 \rangle_{B}\\&|(2+k_{3})-(k_{1}+k_{3}+1)\rangle_{C},~k_{1}<k_{3},\\
 |\psi_{s_{4}+2k_{2}+k_{3}}\rangle=&|(2k_{1}-1)-1\rangle_{A}|k_{2}+(1+k_{2})
 \rangle_{B}\\&|(s_{4}+k_{3}-1)-(s_{4}+k_{3})\rangle_{C},\\ |\psi_{t_{4}+2k_{2}+k_{3}}\rangle=&|2k_{1}-0\rangle_{A}|k_{2}+(1+k_{2})\rangle_{B}
 \\&|(t_{4}+k_{3}-1)-(t_{4}+k_{3})\rangle_{C},
\end{aligned}
\end{equation}
where $d_{i}=2k_{i}+1$ $ (i=1,2,3)$, $s_{1}=k_{1}+2r_{1}-1$, $s_{3}=k_{1}+2r_{1}$, $r_{1}=1,2,\cdots,\lfloor\frac{k_{2}-k_{1}}{2}\rfloor$, $s_{2}=k_{1}+2r_{2}-1$, $s_{4}=k_{1}+2r_{2}$, $r_{2}=1,2,\cdots,\lfloor\frac{k_{3}-k_{1}}{2}\rfloor$. When $k_{2}-k_{1}$ is odd, $t_{1}=k_{1}+2r_{1}$, $t_{3}=k_{1}+2r_{1}+1$; when $k_{2}-k_{1}$ is even, $t_{1}=k_{1}+2u_{1}$, $t_{3}=k_{1}+2u_{1}+1$, $u_{1}=1,2,\cdots,\frac{k_{2}-k_{1}}{2}-1$. When $k_{3}-k_{1}$ is odd, $t_{2}=k_{1}+2r_{2}$, $t_{4}=k_{1}+2r_{2}+1$; when $k_{3}-k_{1}$ is even, $t_{2}=k_{1}+2u_{2}$, $t_{4}=k_{1}+2u_{2}+1$, $u_{2}=1,2,\cdots,\frac{k_{3}-k_{1}}{2}-1$.
\end{theorem}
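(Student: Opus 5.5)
To prove this theorem we must establish four things. First, the set is orthogonal. Second, it is locally distinguishable. Third, it is locally irredundant. Fourth, there is a specific OPLM by Alice after which every outcome is a type-I genuinely nonlocal set, which Proposition~\ref{prop:gn:d1d2d3} will supply.

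\textbf{Orthogonality.} This is a direct check. The $|\phi\rangle$ family lives on Bob's indices $0,\dots,k_2-1$. The $|\psi\rangle$ family lives on Bob's indices $k_2,\dots,2k_2$ and Charlie's $k_3,\dots,2k_3$. The only cross-family overlap is through $|\phi_{2k_2-1}\rangle$ and $|\psi_{2k_2+1}\rangle$, and those are handled by the Alice vectors, which are sums/differences symmetric about $k_1$.

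\textbf{Activating measurement.} Alice performs the two-outcome OPLM
\[
E_1=P[(|0\rangle,|1\rangle,\dots,|k_1\rangle)]_A,\qquad E_2=P[(|k_1\rangle,\dots,|2k_1\rangle)]_A ,
\]
suitably reweighted on $|k_1\rangle$ so that $E_1+E_2=I$. A cleaner alternative is the parity-type projectors $P[(|i\rangle\pm|2k_1-i\rangle)]$, chosen so that each vector $|i-(2k_1-i)\rangle$ collapses to a single basis vector.

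After $E_1$, the vectors $|i-(2k_1-i)\rangle_A$ become $|i\rangle_A$ and $|0-i+(2k_1-i)-2k_1\rangle_A$ becomes $|0-i\rangle_A$. Relabelling Bob and Charlie, the $\phi$-part then coincides with the set of Proposition~\ref{prop:gn:d1d2d3} in $\mathbb{C}^{k_1}\otimes\mathbb{C}^{k_2}\otimes\mathbb{C}^{k_3}$; this needs $k_1\ge 5$, i.e.\ $d_1\ge 11$. The $\psi$-part becomes states whose Alice components are supported on $\{0,\dots,k_1\}$. Under the other outcome the $\psi$-part plays the role of the Proposition~\ref{prop:gn:d1d2d3} set, with Alice's basis relabelled $k_1+i\mapsto i$.

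I will check that post-measurement orthogonality is kept, using the distinct Bob/Charlie supports. Genuine nonlocality of each outcome set then follows because it contains a genuinely nonlocal subset; if a superset were distinguishable in some bipartition, so would be the subset. For type I, I will exhibit a locally reducible OPLM when the three parties are separated, e.g.\ Charlie measuring $P[|1\rangle]_C$ versus its complement, as in the type-I part of Proposition~\ref{prop:gn:d1d2d3}'s proof.

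\textbf{Local distinguishability.} I will give an explicit flowchart. Charlie first separates $\{|1\rangle\}$, $\{|1+k_3\rangle\}$ and the rest. Alice then measures in a basis adapted to $|i\pm(2k_1-i)\rangle$, which makes the entangled-looking Alice superpositions distinguishable. Finally Bob measures in the appropriate basis. The stopper states $|\phi_{2k_2-1}\rangle$ and $|\psi_{2k_2+1}\rangle$ must be isolated at each stage.

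\textbf{Local irredundancy.} This is the hardest step. For each subsystem, and each pair of subsystems, I must show that tracing it out destroys orthogonality of some pair. The plan, following the abstract, is to compute the reduced states via the partial trace and find explicit pairs that become non-orthogonal:
\begin{itemize}
\item discarding Alice: use $|\phi_i\rangle$ and $|\phi_{i+2k_2-1}\rangle$-type pairs that were orthogonal only on $A$;
\item discarding Bob: use pairs that differ only by $|0-i\rangle_B$ versus $|j\rangle_B$;
\item discarding Charlie: use pairs in the $C=|1\rangle$ block.
\end{itemize}
For two-party discards, a $p$-ary indexing of the states should organize the case check so that one collision pair suffices for each bipartition. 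I expect the main obstacle to be verifying that these chosen pairs genuinely lose orthogonality, because the mixed reduced density operators have overlapping supports.
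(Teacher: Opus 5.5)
\emph{Activation step.} Your skeleton matches the paper's: Alice's two-block projection turns each outcome into a copy of Proposition~\ref{prop:gn:d1d2d3}, genuine nonlocality is inherited from that subset, type~I comes from a reducing OPLM, and irredundancy is shown via partial traces. However, the concrete measurements you propose at the two decisive steps are not orthogonality preserving.

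No weight on $|k_1\rangle$ may go into $E_1$. Write $E_1=\sum_j w_j|j\rangle\langle j|$ with $w_j=1$ for $j<k_1$ and $w_j=0$ for $j>k_1$. The states $|\psi_{2k_2+1}\rangle$ and $|\psi_{2k_1}\rangle$ overlap on Bob and Charlie. Orthogonality preservation therefore forces $\langle {}_{0}+_{2k_1}|E_1|k_1-2k_1+0-1\rangle=w_{k_1}-w_{2k_1}+w_0-w_1=w_{k_1}=0$. So $E_1$ must be exactly the paper's $K_1^A=P[(|0\rangle,\ldots,|k_1-1\rangle)]_A$ and $E_2=K_2^A=P[(|k_1\rangle,\ldots,|2k_1\rangle)]_A$. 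This is also the only choice under which the $\phi$-part actually becomes the Proposition~\ref{prop:gn:d1d2d3} set in $\mathbb{C}^{k_1}\otimes\mathbb{C}^{k_2}\otimes\mathbb{C}^{k_3}$: otherwise the stopper's Alice factor is not $|_{0}+_{(k_1-1)}\rangle$. Your parity alternative does not work either: projecting onto $|i\pm(2k_1-i)\rangle$ keeps or annihilates $|i-(2k_1-i)\rangle$, and never sends it to $|i\rangle$.

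\emph{Type~I step.} Charlie's $\{P[|1\rangle]_C,\,I-P[|1\rangle]_C\}$ is not an OPLM on the activated set. After $K_1^A$, the states $|\phi_{2k_2-1}\rangle=|_{0}+_{(k_1-1)}\rangle|_{0}+_{(k_2-1)}\rangle|_{0}+_{(k_3-1)}\rangle$ and $|\phi_{2k_2}\rangle=|1\rangle|0+1\rangle|0-1\rangle$ are orthogonal only on Charlie's side. Projecting onto $|1\rangle_C$ makes them overlap, and the same happens after $K_2^A$. There is also no ``type-I part'' of Proposition~\ref{prop:gn:d1d2d3}: its proof concludes that set is locally irreducible when the parties are separated. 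Type~I therefore cannot be inherited; it comes from the two-block structure. The right witness is Charlie's $\{P[(|0\rangle,\ldots,|k_3-1\rangle)]_C,\,P[(|k_3\rangle,\ldots,|2k_3\rangle)]_C\}$, which leaves each family untouched and separates the $\phi$'s from the $\psi$'s.

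\emph{Irredundancy step.} Two of your three witness choices fail. First, $|\phi_i\rangle$ and $|\phi_{i+2k_2-1}\rangle$ have the same Alice factor $|i-(2k_1-i)\rangle$; they are orthogonal on Bob for $i=1$ and on Charlie for $i\ge 2$, so discarding Alice does nothing to them. Second, pairs inside the $|1\rangle_C$ block are identical on Charlie, so discarding Charlie never touches them. Families orthogonal only on a single party are:
\begin{itemize}
\item $\{|\phi_s\rangle\}_{s=1}^{k_1-1}$ for Alice;
\item $\{|\phi_s\rangle\}_{s=k_1}^{k_1+k_2-2}$ for Bob;
\item $\{|\phi_{2k_1-2}\rangle\}\cup\{|\phi_s\rangle\}_{s=k_1+2k_2-1}^{k_1+2k_2+k_3-4}$ for Charlie.
\end{itemize}

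More importantly, redundancy here (as in Bandyopadhyay et al.\ and in the paper's proof) allows local unitaries $U_A\otimes U_B\otimes U_C$ followed by discarding factors of a composite local space. Against that, a single colliding pair proves nothing. For any proper nonempty set of discarded factors, two states orthogonal on Alice can be rotated by $U_A$ so that they stay orthogonal on the kept factors.

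The missing idea is a counting argument. Partial trace preserves nonorthogonality, so the $k_1-1$ states orthogonal only on Alice would need pairwise orthogonal reduced Alice states on the kept factors. Those factors have dimension at most $d_1/3<k_1-1$. The same argument works for Bob and Charlie with $k_2-1$ and $k_3-1$ states. This is exactly where $d_1\ge 11$ (i.e.\ $k_1\ge 5$) is used, not in invoking Proposition~\ref{prop:gn:d1d2d3}, which only needs $k_1\ge 4$.

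A minor point: your orthogonality remark is off. The two families have disjoint Bob supports, so every cross pair, including the two stoppers, is orthogonal on Bob.
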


\begin{proof}
$(i)$ {\it Local distinguishability}. Define the measurements: $M_{1}^{A}=\{P_{i}^{A}=P[|i-(2k_{1}-i)\rangle]_{A}, i=1,2,\cdots,k_{1}-1, P_{k_{1}}^{A}=I-\sum_{i=1}^{k_{1}-1}P_{i}^{A}\}$, $M_{2}^{A}=\{Q_{i}^{A}=P[|(i+k_{1})-(k_{1}-i)\rangle]_{A}, i=1,2,\cdots,k_{1}, Q_{k_{1}+1}^{A}=I-\sum_{i=1}^{k_{1}}Q_{i}^{A}\}$ for Alice; $M_{1}^{B}=\{P_{i}^{B}=P[|i\rangle]_{B},i=1,\cdots,k_{2}-1, P_{k_{2}}^{B}=I-\sum_{i=1}^{k_{2}-1}P_{i}^{B}\}$, $M_{2}^{B}=\{Q_{i}^{B}=P[|(i+k_{2})\rangle]_{B},i=1,\cdots,k_{2}, Q_{k_{2}+1}^{B}=I-\sum_{i=1}^{k_{2}}Q_{i}^{B}\}$ for Bob;  $M_{1}^{C}=\{P_{i}^{C}=P[|i\rangle]_{C}, i\in \mathbb{Z}_{d_{3}}\}$ for Charlie. The details of the local discrimination protocol are shown in Fig. \ref{fig:1:odd:d1d2d3}
\begin{figure*}[h]
	\centering	\includegraphics[scale=0.5]{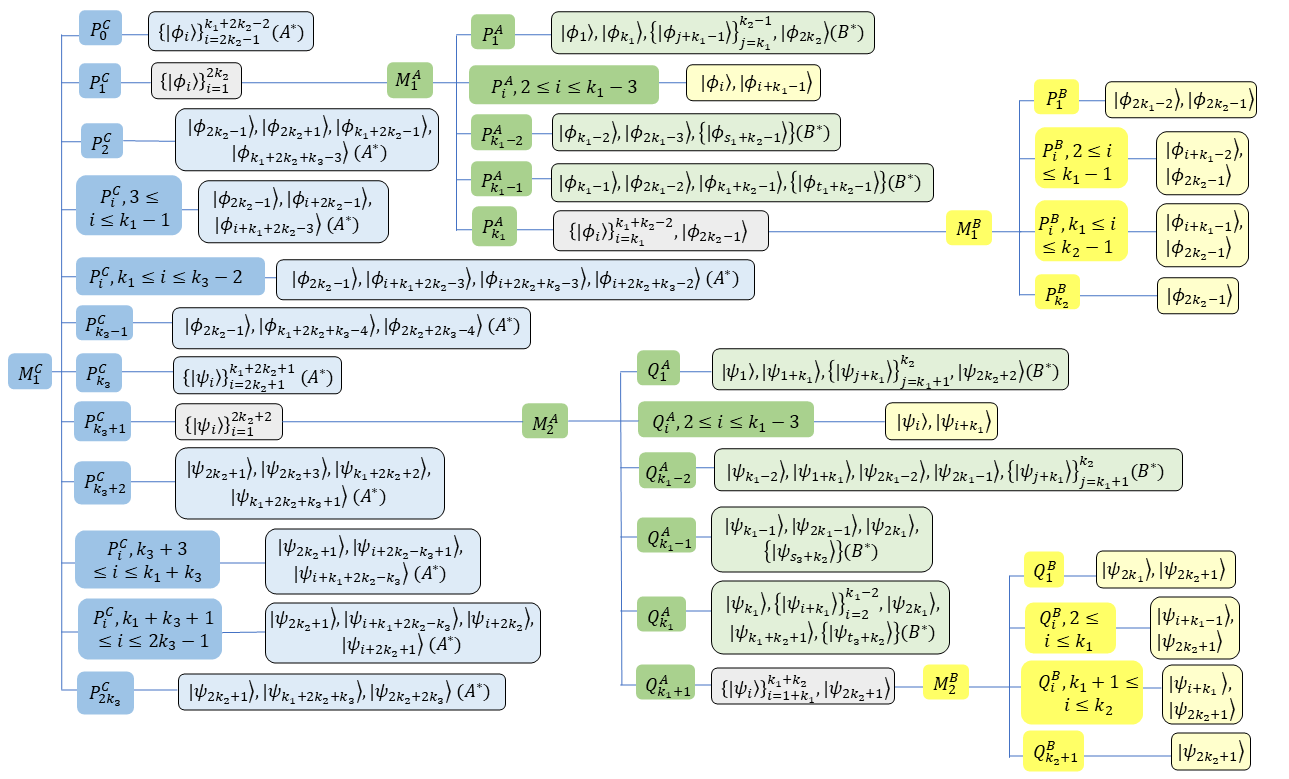}
	\caption{Distinction details for $\mathbb{C}^{d_{1}}\otimes \mathbb{C}^{d_{2}}\otimes \mathbb{C}^{d_{3}}$ ($d_{i}$ is odd)}
         \label{fig:1:odd:d1d2d3}
\end{figure*}

$(ii)$ {\it Local irredundancy}. Let $d_{i}=p_{1}^{(i)}p_{2}^{(i)}\cdots p_{n_{i}}^{(i)}=2k_{i}+1\geq11$, where $p_{l}^{(i)}$ is a prime factor of $d_{i}$ (here we assume that $3\leq p_{1}^{(i)}\leq p_{2}^{(i)}\leq \cdots \leq p_{n_{i}}^{(i)}, n_{i}\geq2, i=1,2,3$). Suppose that  $\mathcal{H}_{A}=\mathbb{C}^{d_{1}}$ can be factored into an $n_{1}$-partite system $\otimes_{l=1}^{n_{1}}\mathcal{H}_{a_{l}}$ with $\mathcal{H}_{a_{l}}=\mathbb{C}^{p_{l}^{(1)}}$; $\mathcal{H}_{B}=\mathbb{C}^{d_{2}}$ can be factored into an $n_{2}$-partite system $\otimes_{l=1}^{n_{2}}\mathcal{H}_{b_{l}}$ with $\mathcal{H}_{b_{l}}=\mathbb{C}^{p_{l}^{(2)}}$; $\mathcal{H}_{C}=\mathbb{C}^{d_{3}}$ can be factored into an $n_{3}$-partite system $\otimes_{l=1}^{n_{3}}\mathcal{H}_{c_{l}}$ with $\mathcal{H}_{c_{l}}=\mathbb{C}^{p_{l}^{(3)}}$. Denote $|\phi_{s} \rangle=|\phi_{s} \rangle_{A}|\phi_{s} \rangle_{B}|\phi_{s} \rangle_{C}$ and the discarded subsystems by $\mathcal{T}_{A} \subseteq \{a_{1},a_{2}\cdots,a_{n_{1}}\}, \mathcal{T}_{B} \subseteq \{b_{1},b_{2},\cdots,b_{n_{2}}\}$ and $\mathcal{T}_{C} \subseteq \{c_{1},c_{2},\cdots,c_{n_{3}}\}$.

For $s\neq t$, we have $\langle\phi_{s}|\phi_{t}\rangle_{B}\neq 0$ and $\langle\phi_{s}|\phi_{t}\rangle_{C}\neq 0$ when  $s,t\in\{1,\cdots,k_{1}-1\}$,
while for $s,t\in\{k_{1},\cdots,k_{1}+k_{2}-2\}$ we have $\langle\phi_{s}|\phi_{t}\rangle_{A}\neq 0$ and $\langle\phi_{s}|\phi_{t}\rangle_{C}\neq 0$, and for $s,t\in\{2k_{1}-2,k_{1}+2k_{2}-1,\cdots,k_{1}+2k_{2}+k_{3}-4\}$ we have $\langle\phi_{s}|\phi_{t}\rangle_{A}\neq 0$ and $\langle\phi_{s}|\phi_{t}\rangle_{B}\neq 0$. If the set is local redundancy, there must exist nonempty sets among $\mathcal{T}_{A}$, $\mathcal{T}_{B}$, $\mathcal{T}_{C}$ and some unitary matrices $U_{A}\in U(d_{1})$, $U_{B}\in U(d_{2})$, $U_{C}\in U(d_{3})$ such that $\{Tr_{\mathcal{T}_{A}\cup\mathcal{T}_{B}\cup\mathcal{T}_{C}}[(U_{A}\otimes U_{B}\otimes U_{C})|\phi_{s}\rangle \langle\phi_{s}|(U_{A}^{\dagger}\otimes U_{B}^{\dagger}\otimes U_{C}^{\dagger})]\}_{s=1}^{2k_{2}+2k_{3}-4}$ are pairwise orthogonal. This is equivalent to the orthogonality of the set $\{Tr_{\mathcal{T}_{A}}(U_{A}|\phi_{s}\rangle_{A}\langle\phi_{s}|U_{A}^{\dagger})
\otimes Tr_{\mathcal{T}_{B}}(U_{B}|\phi_{s}\rangle_{B}\langle\phi_{s}|U_{B}^{\dagger})
\otimes Tr_{\mathcal{T}_{C}}(U_{C}|\phi_{s}\rangle_{C}\langle\phi_{s}|U_{C}^{\dagger})
\}_{s=1}^{2k_{2}+2k_{3}-4}$. Since the partial trace operation preserves nonorthogonality, the following sets must be orthogonal to ensure the satisfaction of these conditions: $\{Tr_{\mathcal{T}_{A}}(U_{A}|\phi_{s}\rangle_{A} \langle\phi_{s}|U_{A}^{\dagger})\}_{s=1}^{k_{1}-1}$, $\{Tr_{\mathcal{T}_{B}}(U_{B}|\phi_{s}\rangle_{B} \langle\phi_{s}|U_{B}^{\dagger})\}_{s=k_{1}}^{k_{1}+k_{2}-2}$, $\{Tr_{\mathcal{T}_{C}}(U_{C}|\phi_{s}\rangle_{C} \langle\phi_{s}|$ $U_{C}^{\dagger}), s=2k_{1}-2,k_{1}+2k_{2}-1,\cdots,k_{1}+2k_{2}+k_{3}-4\}$. Without loss of generality, we assume that $\mathcal{T}_{A}\neq\emptyset$. Therefore, the resulting states $\{Tr_{\mathcal{T}_{A}}(U_{A}|\phi_{s}\rangle_{A} \langle\phi_{s}|U_{A}^{\dagger})\}_{s=1}^{k_{1}-1}$ are ($k_{1}-1$) orthogonal states in the systems corresponding to $\{a_{1},a_{2},\cdots,a_{n_{1}}\}\setminus\mathcal{T}_{A}$ whose dimension is at most $\frac{d_{1}}{3}<k_{1}-1$. Thus, we arrive at a contradiction.

$(iii)$ {\it The above set can be transformed into a genuinely nonlocal set.} Suppose that Alice performs the measurement $K^{A}=\{K_{1}^{A}, K_{2}^{A}\}$, where $K_{1}^{A}=\sum_{j=0}^{k_{1}-1}|j\rangle_{A} \langle j|$ and $K_{2}^{A}=\sum_{j=k_{1}}^{2k_{1}}|j\rangle_{A} \langle j|$.

If the outcome of $K^{A}$ is `1', the states are transferred as follows:
\begin{equation*}
\begin{aligned}
 |\phi_{i}\rangle=&|i\rangle_{A}|0-i\rangle_{B}|1\rangle_{C},~1\leq i\leq k_{1}-1,\\
 |\phi_{i+k_{1}-1}\rangle=&|0-i\rangle_{A}|j\rangle_{B}
 |1\rangle_{C},~1\leq i\leq k_{1}-2,\\&j=i+1;i=k_{1}-1,j=1,\\
 |\phi_{j+k_{1}-1}\rangle=&|0-1\rangle_{A}|j\rangle_{B}
 |1\rangle_{C},~k_{1}\leq j\leq k_{2}-1,\\
 |\phi_{k_{1}+k_{2}-1}\rangle=&|(k_{1}-1)\rangle_{A}|2-k_{1}\rangle_{B}
 |1\rangle_{C},~k_{1}<k_{2},\\
 |\phi_{s_{1}+k_{2}-1}\rangle=&|(k_{1}-2)\rangle_{A}
 |(s_{1}-1)-s_{1}\rangle_{B}|1\rangle_{C},\\
 |\phi_{t_{1}+k_{2}-1}\rangle=&|(k_{1}-1)\rangle_{A}
 |(t_{1}-1)-t_{1}\rangle_{B}|1\rangle_{C},\\
|\phi_{2k_{2}-1}\rangle=&|_{0}+_{(k_{1}-1)}\rangle_{A}|_{0}+_{(k_{2}-1)}\rangle_{B}
 |_{0}+_{(k_{3}-1)}\rangle_{C},\\ |\phi_{i+2k_{2}-1}\rangle=&|i\rangle_{A}|0+1\rangle_{B}|0-i\rangle_{C},
 ~1\leq i\leq k_{1}-1,\\
 |\phi_{i+k_{1}+2k_{2}-2}\rangle=&|0-i\rangle_{A}|0+1\rangle_{B}
 |j\rangle_{C},~1\leq i\leq k_{1}-2,\\&j=i+1,\\
 |\phi_{j+k_{1}+2k_{2}-3}\rangle=&|0-1\rangle_{A}|0+1\rangle_{B}
 |j\rangle_{C},~k_{1}\leq j \leq k_{3}-1,\\
 |\phi_{k_{1}+2k_{2}+k_{3}-3}\rangle=&|(k_{1}-1)\rangle_{A}
 |0+1\rangle_{B}|2-k_{1}\rangle_{C},~k_{1}<k_{3},\\
 |\phi_{s_{2}+2k_{2}+k_{3}-3}\rangle=&|(k_{1}-2)\rangle_{A}
 |0+1\rangle_{B}|(s_{2}-1)-s_{2}\rangle_{C},\\
 |\phi_{t_{2}+2k_{2}+k_{3}-3}\rangle=&|(k_{1}-1)\rangle_{A}
 |0+1\rangle_{B}|(t_{2}-1)-t_{2}\rangle_{C},\\
 |\psi_{i}\rangle=&|(k_{1}-i)\rangle_{A}|k_{2}-(i+k_{2})\rangle_{B}
 |(1+k_{3})\rangle_{C},\\&1\leq i\leq k_{1},\\
 |\psi_{1+k_{1}}\rangle=&|(k_{1}-1)-2\rangle_{A}
 |(2+k_{2})\rangle_{B}|(1+k_{3})\rangle_{C},\\
 |\psi_{i+k_{1}}\rangle=&|(k_{1}-i)-0\rangle_{A}
 |(j+k_{2})\rangle_{B}|(1+k_{3})\rangle_{C},\\&2\leq i\leq k_{1}-2,j=i+1,\\
 |\psi_{2k_{1}-1}\rangle=&|1-2\rangle_{A}|(k_{1}+k_{2})\rangle_{B}
 |(1+k_{3})\rangle_{C},\\
 |\psi_{2k_{1}}\rangle=&|0-1\rangle_{A}|(1+k_{2})\rangle_{B}
 |(1+k_{3})\rangle_{C},\\
 |\psi_{j+k_{1}}\rangle=&|(k_{1}-1)-2\rangle_{A}
 |(j+k_{2})\rangle_{B}|(1+k_{3})\rangle_{C},\\&k_{1}+1\leq j \leq k_{2},\\
 |\psi_{k_{1}+k_{2}+1}\rangle=&|0\rangle_{A}|(2+k_{2})-(k_{1}+k_{2}+1)
 \rangle_{B}\\&|(1+k_{3})\rangle_{C},~k_{1}<k_{2},\\
 |\psi_{s_{3}+k_{2}}\rangle=&|1\rangle_{A}|(s_{3}+k_{2}-1)-(s_{3}+k_{2})
 \rangle_{B}|(1+k_{3})\rangle_{C},\\
 |\psi_{t_{3}+k_{2}}\rangle=&|0\rangle_{A}|(t_{3}+k_{2}-1)-(t_{3}+k_{2})
 \rangle_{B}|(1+k_{3})\rangle_{C},\\
|\psi_{2k_{2}+1}\rangle=&|_{0}+_{(k_{1}-1)}\rangle_{A}|_{k_{2}}+_{2k_{2}}\rangle_{B}
 |_{k_{3}}+_{2k_{3}}\rangle_{C},\\
 |\psi_{i+2k_{2}+1}\rangle=&|(k_{1}-i)\rangle_{A}|k_{2}+(1+k_{2})
 \rangle_{B}|k_{3}-\\&(i+k_{3})\rangle_{C},~1\leq i\leq k_{1},\\
 |\psi_{k_{1}+2k_{2}+2}\rangle=&|(k_{1}-1)-2\rangle_{A}
 |k_{2}+(1+k_{2})\rangle_{B}\\&|(2+k_{3})\rangle_{C},\\
\end{aligned}
\end{equation*}
\begin{equation}\label{eq:1:d1d2d3:1}
\begin{aligned}
 |\psi_{i+k_{1}+2k_{2}+1}\rangle=&|(k_{1}-i)-0\rangle_{A}
 |k_{2}+(1+k_{2})\rangle_{B}\\&|(j+k_{3})\rangle_{C},~2\leq i\leq k_{1}-2,j=i+1,\\
 |\psi_{2k_{1}+2k_{2}}\rangle=&|1-2\rangle_{A}|k_{2}+(1+k_{2})
 \rangle_{B}|(k_{1}+k_{3})\rangle_{C},\\
 |\psi_{j+k_{1}+2k_{2}}\rangle=&|(k_{1}-1)-2\rangle_{A}
 |k_{2}+(1+k_{2})\rangle_{B}\\&|(j+k_{3})\rangle_{C},~k_{1}+1\leq j \leq k_{3},\\
|\psi_{k_{1}+2k_{2}+k_{3}+1}\rangle=&|0\rangle_{A}|k_{2}+(1+k_{2})
 \rangle_{B}|(2+k_{3})-\\&(k_{1}+k_{3}+1)\rangle_{C},~k_{1}<k_{3},\\
|\psi_{s_{4}+2k_{2}+k_{3}}\rangle=&|1\rangle_{A}|k_{2}+(1+k_{2})
 \rangle_{B}|(s_{4}+k_{3}-1)-\\&(s_{4}+k_{3})\rangle_{C},\\
|\psi_{t_{4}+2k_{2}+k_{3}}\rangle=&|0\rangle_{A}|k_{2}+(1+k_{2})\rangle_{B}
 |(t_{4}+k_{3}-1)-\\&(t_{4}+k_{3})\rangle_{C},\\
\end{aligned}
\end{equation}
where $d_{i}=2k_{i}+1(i=1,2,3)$, $s_{1}=k_{1}+2r_{1}-1$, $s_{3}=k_{1}+2r_{1}$, $r_{1}=1,2,\cdots,\lfloor\frac{k_{2}-k_{1}}{2}\rfloor$, $s_{2}=k_{1}+2r_{2}-1$, $s_{4}=k_{1}+2r_{2}$, $r_{2}=1,2,\cdots,\lfloor\frac{k_{3}-k_{1}}{2}\rfloor$. When $k_{2}-k_{1}$ is odd, $t_{1}=k_{1}+2r_{1}$, $t_{3}=k_{1}+2r_{1}+1$; when $k_{2}-k_{1}$ is even, $t_{1}=k_{1}+2u_{1}$, $t_{3}=k_{1}+2u_{1}+1$, $u_{1}=1,2,\cdots,\frac{k_{2}-k_{1}}{2}-1$. When $k_{3}-k_{1}$ is odd, $t_{2}=k_{1}+2r_{2}$, $t_{4}=k_{1}+2r_{2}+1$; when $k_{3}-k_{1}$ is even, $t_{2}=k_{1}+2u_{2}$, $t_{4}=k_{1}+2u_{2}+1$, $u_{2}=1,2,\cdots,\frac{k_{3}-k_{1}}{2}-1$.

From Proposition \ref{prop:gn:d1d2d3}, it can be seen that $\{|\phi_{s}\rangle\}_{s=1}^{2k_{2}+2k_{3}-4}$ is a genuinely nonlocal set in $\mathbb{C}^{k_{1}}\otimes \mathbb{C}^{k_{2}}\otimes \mathbb{C}^{k_{3}}$. When the parties are separated, Charlie performs the measurement $\{K_{1}^{C}=\sum_{j=0}^{k_{3}-1}|j\rangle_{C} \langle j|, K_{2}^{C}=\sum_{j=k_{3}}^{2k_{3}}|j\rangle_{C} \langle j|\}$ to convert the above set to $\{|\phi_{s}\rangle\}_{s=1}^{2k_{2}+2k_{3}-4}$ and $\{|\psi_{s}\rangle\}_{s=1}^{2k_{2}+2k_{3}}$. Hence the orthogonal product states in Eq.(\ref{eq:1:d1d2d3:1}) has genuine nonlocality of type \uppercase\expandafter{\romannumeral 1}.

If the outcome of $K^{A}$ is `2', the states are transferred as follows:
\begin{equation*}
\begin{aligned}
 |\phi_{i}\rangle=&|(2k_{1}-i)\rangle_{A}|0-i\rangle_{B}|1\rangle_{C},~1\leq i\leq k_{1}-1,\\
 |\phi_{i+k_{1}-1}\rangle=&|(2k_{1}-i)-2k_{1}\rangle_{A}|j\rangle_{B}
 |1\rangle_{C},\\&1\leq i\leq k_{1}-2,j=i+1;\\&i=k_{1}-1,j=1,\\
 |\phi_{j+k_{1}-1}\rangle=&|(2k_{1}-1)-2k_{1}\rangle_{A}|j\rangle_{B}
 |1\rangle_{C},\\&k_{1}\leq j\leq k_{2}-1,\\
 |\phi_{k_{1}+k_{2}-1}\rangle=&|(k_{1}+1)\rangle_{A}|2-k_{1}\rangle_{B}
 |1\rangle_{C},~k_{1}<k_{2},\\
 |\phi_{s_{1}+k_{2}-1}\rangle=&|(k_{1}+2)\rangle_{A}
 |(s_{1}-1)-s_{1}\rangle_{B}|1\rangle_{C},\\
 |\phi_{t_{1}+k_{2}-1}\rangle=&|(k_{1}+1)\rangle_{A}
 |(t_{1}-1)-t_{1}\rangle_{B}|1\rangle_{C},\\
 |\phi_{2k_{2}-1}\rangle=&|_{k_{1}}+_{2k_{1}}\rangle_{A}|_{0}+_{(k_{2}-1)}\rangle_{B}
 |_{0}+_{(k_{3}-1)}\rangle_{C},\\
 |\phi_{i+2k_{2}-1}\rangle=&|(2k_{1}-i)\rangle_{A}|0+1\rangle_{B}|0-i\rangle_{C},
 \\&1\leq i\leq k_{1}-1,\\
 |\phi_{i+k_{1}+2k_{2}-2}\rangle=&|(2k_{1}-i)-2k_{1}\rangle_{A}|0+1\rangle_{B}
 |j\rangle_{C},\\&1\leq i\leq k_{1}-2,j=i+1,\\
 |\phi_{j+k_{1}+2k_{2}-3}\rangle=&|(2k_{1}-1)-2k_{1}\rangle_{A}|0+1\rangle_{B}
 |j\rangle_{C},\\&k_{1}\leq j \leq k_{3}-1,\\
 |\phi_{k_{1}+2k_{2}+k_{3}-3}\rangle=&|(k_{1}+1)\rangle_{A}
 |0+1\rangle_{B}|2-k_{1}\rangle_{C},~k_{1}<k_{3},\\
 |\phi_{s_{2}+2k_{2}+k_{3}-3}\rangle=&|(k_{1}+2)\rangle_{A}
 |0+1\rangle_{B}|(s_{2}-1)-s_{2}\rangle_{C},\\
 |\phi_{t_{2}+2k_{2}+k_{3}-3}\rangle=&|(k_{1}+1)\rangle_{A}
 |0+1\rangle_{B}|(t_{2}-1)-t_{2}\rangle_{C},\\
 \end{aligned}
\end{equation*}
\begin{equation}\label{eq:1:d1d2d3:2}
 \begin{aligned}
 |\psi_{i}\rangle=&|(i+k_{1})\rangle_{A}|k_{2}-(i+k_{2})\rangle_{B}
 |(1+k_{3})\rangle_{C},\\&1\leq i\leq k_{1},\\
 |\psi_{1+k_{1}}\rangle=&|k_{1}-(1+k_{1})\rangle_{A}
 |(2+k_{2})\rangle_{B}|(1+k_{3})\rangle_{C},\\
 |\psi_{i+k_{1}}\rangle=&|k_{1}-(i+k_{1})\rangle_{A}
 |(j+k_{2})\rangle_{B}|(1+k_{3})\rangle_{C},\\&2\leq i\leq k_{1}-2,j=i+1,\\
 |\psi_{2k_{1}-1}\rangle=&|k_{1}-(2k_{1}-1)\rangle_{A}|(k_{1}+k_{2})\rangle_{B}\\&
 |(1+k_{3})\rangle_{C},\\
|\psi_{2k_{1}}\rangle=&|k_{1}-2k_{1}\rangle_{A}|(1+k_{2})\rangle_{B}
 |(1+k_{3})\rangle_{C},\\
 |\psi_{j+k_{1}}\rangle=&|k_{1}-(1+k_{1})\rangle_{A}
 |(j+k_{2})\rangle_{B}|(1+k_{3})\rangle_{C},\\&k_{1}+1\leq j \leq k_{2},\\
|\psi_{k_{1}+k_{2}+1}\rangle=&|2k_{1}\rangle_{A}|(2+k_{2})-(k_{1}+k_{2}+1)
 \rangle_{B}\\&|(1+k_{3})\rangle_{C},~k_{1}<k_{2},\\
 |\psi_{s_{3}+k_{2}}\rangle=&|(2k_{1}-1)\rangle_{A}|(s_{3}+k_{2}-1)-(s_{3}+k_{2})
 \rangle_{B}\\&|(1+k_{3})\rangle_{C},\\
 |\psi_{t_{3}+k_{2}}\rangle=&|2k_{1}\rangle_{A}|(t_{3}+k_{2}-1)-(t_{3}+k_{2})
 \rangle_{B}\\&|(1+k_{3})\rangle_{C},\\
|\psi_{2k_{2}+1}\rangle=&|_{k_{1}}+_{2k_{1}}\rangle_{A}|_{k_{2}}+_{2k_{2}}\rangle_{B}
 |_{k_{3}}+_{2k_{3}}\rangle_{C},\\
 |\psi_{i+2k_{2}+1}\rangle=&|(i+k_{1})\rangle_{A}|k_{2}+(1+k_{2})
 \rangle_{B}|k_{3}-\\&(i+k_{3})\rangle_{C},~1\leq i\leq k_{1},\\
 |\psi_{k_{1}+2k_{2}+2}\rangle=&|k_{1}-(1+k_{1})\rangle_{A}
 |k_{2}+(1+k_{2})\rangle_{B}\\&|(2+k_{3})\rangle_{C},\\
 |\psi_{i+k_{1}+2k_{2}+1}\rangle=&|k_{1}-(i+k_{1})\rangle_{A}
 |k_{2}+(1+k_{2})\rangle_{B}\\&|(j+k_{3})\rangle_{C},~2\leq i\leq k_{1}-2,j=i+1,\\
 |\psi_{2k_{1}+2k_{2}}\rangle=&|k_{1}-(2k_{1}-1)\rangle_{A}|k_{2}+(1+k_{2})
 \rangle_{B}\\&|(k_{1}+k_{3})\rangle_{C},\\
 |\psi_{j+k_{1}+2k_{2}}\rangle=&|k_{1}-(1+k_{1})\rangle_{A}
 |k_{2}+(1+k_{2})\rangle_{B}\\&|(j+k_{3})\rangle_{C},~k_{1}+1\leq j \leq k_{3},\\ |\psi_{k_{1}+2k_{2}+k_{3}+1}\rangle=&|2k_{1}\rangle_{A}|k_{2}+(1+k_{2})
 \rangle_{B}|(2+k_{3})\\&-(k_{1}+k_{3}+1)\rangle_{C},~k_{1}<k_{3},\\
 |\psi_{s_{4}+2k_{2}+k_{3}}\rangle=&|(2k_{1}-1)\rangle_{A}|k_{2}+(1+k_{2})
 \rangle_{B}|(s_{4}+k_{3}\\&-1)-(s_{4}+k_{3})\rangle_{C},\\ |\psi_{t_{4}+2k_{2}+k_{3}}\rangle=&|2k_{1}\rangle_{A}|k_{2}+(1+k_{2})\rangle_{B}
 |(t_{4}+k_{3}-1)-\\&(t_{4}+k_{3})\rangle_{C},\\
\end{aligned}
\end{equation}
where $d_{i}=2k_{i}+1(i=1,2,3)$, $s_{1}=k_{1}+2r_{1}-1$, $s_{3}=k_{1}+2r_{1}$, $r_{1}=1,2,\cdots,\lfloor\frac{k_{2}-k_{1}}{2}\rfloor$, $s_{2}=k_{1}+2r_{2}-1$, $s_{4}=k_{1}+2r_{2}$, $r_{2}=1,2,\cdots,\lfloor\frac{k_{3}-k_{1}}{2}\rfloor$. When $k_{2}-k_{1}$ is odd, $t_{1}=k_{1}+2r_{1}$, $t_{3}=k_{1}+2r_{1}+1$; when $k_{2}-k_{1}$ is even, $t_{1}=k_{1}+2u_{1}$, $t_{3}=k_{1}+2u_{1}+1$, $u_{1}=1,2,\cdots,\frac{k_{2}-k_{1}}{2}-1$. When $k_{3}-k_{1}$ is odd, $t_{2}=k_{1}+2r_{2}$, $t_{4}=k_{1}+2r_{2}+1$; when $k_{3}-k_{1}$ is even, $t_{2}=k_{1}+2u_{2}$, $t_{4}=k_{1}+2u_{2}+1$, $u_{2}=1,2,\cdots,\frac{k_{3}-k_{1}}{2}-1$.

From Proposition \ref{prop:gn:d1d2d3}, it can be seen that $\{|\psi_{s}\rangle\}_{s=1}^{2k_{2}+2k_{3}}$ is a genuinely nonlocal set in $\mathbb{C}^{k_{1}+1}\otimes \mathbb{C}^{k_{2}+1}\otimes \mathbb{C}^{k_{3}+1}$. When the parties are separated, Charlie performs the measurement $\{K_{1}^{C}=\sum_{j=0}^{k_{3}-1}|j\rangle_{C} \langle j|, K_{2}^{C}=\sum_{j=k_{3}}^{2k_{3}}|j\rangle_{C} \langle j|\}$ to convert the above set to $\{|\phi_{s}\rangle\}_{s=1}^{2k_{2}+2k_{3}-4}$ and $\{|\psi_{s}\rangle\}_{s=1}^{2k_{2}+2k_{3}}$. Hence the orthogonal product states in Eq.(\ref{eq:1:d1d2d3:2}) is locally reducible, so it has genuine nonlocality of type \uppercase\expandafter{\romannumeral 1}.
\end{proof}

\subsection{$\mathbb{C}^{d_{1}}\otimes\mathbb{C}^{d_{2}}\otimes\mathbb{C}^{d_{3}} (d_{i}$ is even)}

\begin{theorem}\label{th:1:d1d2d3:even}
Let $d_{i}$ $(i=1,2,3)$ be even integers. The following orthogonal set without entanglement in $\mathbb{C}^{d_{1}}\otimes\mathbb{C}^{d_{2}}\otimes\mathbb{C}^{d_{3}}$ ($8\leq d_{1}\leq d_{2}\leq d_{3}$) can be deterministically converted to a type-\uppercase\expandafter{\romannumeral 1} genuinely nonlocal set via {\textcolor{blue}{OPLMs by Alice}}:
\begin{equation*}
\begin{aligned}
 |\phi_{i}\rangle=&|i-(2k_{1}-1-i)\rangle_{A}|0-i\rangle_{B}|1\rangle_{C},\\&1\leq i\leq k_{1}-1,\\
 |\phi_{i+k_{1}-1}\rangle=&|0-i+(2k_{1}-1-i)-(2k_{1}-1)\rangle_{A}|j\rangle_{B}\\&
 |1\rangle_{C},~1\leq i\leq k_{1}-2,j=i+1;\\&i=k_{1}-1,j=1,\\
 |\phi_{j+k_{1}-1}\rangle=&|0-1+(2k_{1}-2)-(2k_{1}-1)\rangle_{A}|j\rangle_{B}\\&
 |1\rangle_{C},~k_{1}\leq j\leq k_{2}-1,\\
 |\phi_{k_{1}+k_{2}-1}\rangle=&|(k_{1}-1)-k_{1}\rangle_{A}|2-k_{1}\rangle_{B}
 |1\rangle_{C},~k_{1}<k_{2},\\
 |\phi_{s_{1}+k_{2}-1}\rangle=&|(k_{1}-2)-(k_{1}+1)\rangle_{A}
 |(s_{1}-1)-s_{1}\rangle_{B}\\&|1\rangle_{C},\\
 |\phi_{t_{1}+k_{2}-1}\rangle=&|(k_{1}-1)-k_{1}\rangle_{A}
 |(t_{1}-1)-t_{1}\rangle_{B}|1\rangle_{C},\\ |\phi_{2k_{2}-1}\rangle=&|_{0}+_{(2k_{1}-1)}\rangle_{A}|_{0}+_{(k_{2}-1)}\rangle_{B}
 |_{0}+_{(k_{3}-1)}\rangle_{C},\\
 |\phi_{i+2k_{2}-1}\rangle=&|i-(2k_{1}-1-i)\rangle_{A}|0+1\rangle_{B}|0-i\rangle_{C},
 \\&1\leq i\leq k_{1}-1,\\
  |\phi_{i+k_{1}+2k_{2}-2}\rangle=&|0-i+(2k_{1}-1-i)-(2k_{1}-1)\rangle_{A}\\&|0+1\rangle_{B}
 |j\rangle_{C},~1\leq i\leq k_{1}-2,j=i+1,\\
 |\phi_{j+k_{1}+2k_{2}-3}\rangle=&|0-1+(2k_{1}-2)-(2k_{1}-1)\rangle_{A}|0+1\rangle_{B}\\&
 |j\rangle_{C},~k_{1}\leq j \leq k_{3}-1,\\
 |\phi_{k_{1}+2k_{2}+k_{3}-3}\rangle=&|(k_{1}-1)-k_{1}\rangle_{A}
 |0+1\rangle_{B}|2-k_{1}\rangle_{C},\\&k_{1}<k_{3},\\
 |\phi_{s_{2}+2k_{2}+k_{3}-3}\rangle=&|(k_{1}-2)-(k_{1}+1)\rangle_{A}
 |0+1\rangle_{B}\\&|(s_{2}-1)-s_{2}\rangle_{C},\\
 |\phi_{t_{2}+2k_{2}+k_{3}-3}\rangle=&|(k_{1}-1)-k_{1}\rangle_{A}
 |0+1\rangle_{B}|(t_{2}-1)-t_{2}\rangle_{C},\\
 |\phi_{2k_{2}+2k_{3}-3}\rangle=&|0-1+(2k_{1}-2)-(2k_{1}-1)\rangle_{A}|0+1\rangle_{B}\\&
 |k_{3}\rangle_{C},\\
 |\phi_{2k_{2}+2k_{3}-2}\rangle=&|0-1+(2k_{1}-2)-(2k_{1}-1)\rangle_{A}|0+1\rangle_{B}\\&
 |(k_{3}+1)\rangle_{C},\\
  |\phi_{2k_{2}+2k_{3}-1}\rangle=&|0-1+(2k_{1}-2)-(2k_{1}-1)\rangle_{A}|k_{2}\rangle_{B}
 |1\rangle_{C},\\
 |\phi_{2k_{2}+2k_{3}}\rangle=&|0-1+(2k_{1}-2)-(2k_{1}-1)\rangle_{A}\\&|(k_{2}+1)\rangle_{B}
 |1\rangle_{C},\\
 |\psi_{i}\rangle=&|(i+k_{1})-(k_{1}-1-i)\rangle_{A}|k_{2}-(i+k_{2})\rangle_{B}\\&
 |(1+k_{3})\rangle_{C},~1\leq i\leq k_{1}-1,\\
 |\psi_{i+k_{1}-1}\rangle=&|k_{1}-(i+k_{1})+(k_{1}-1-i)-(k_{1}-1)\rangle_{A}\\&
 |(j+k_{2})\rangle_{B}|(1+k_{3})\rangle_{C},1\leq i\leq k_{1}-2,\\&j=i+1;i=k_{1}-1,j=1,\\
   |\psi_{j+k_{1}-1}\rangle=&|k_{1}-(1+k_{1})+(k_{1}-2)-(k_{1}-1)\rangle_{A}\\&
 |(j+k_{2})\rangle_{B}|(1+k_{3})\rangle_{C},k_{1}\leq j \leq k_{2}-1,\\
 \end{aligned}
\end{equation*}

\begin{equation}
 \begin{aligned}
 |\psi_{k_{1}+k_{2}-1}\rangle=&|(2k_{1}-1)-0\rangle_{A}|(2+k_{2})-(k_{1}+k_{2})
 \rangle_{B}\\&|(1+k_{3})\rangle_{C},~k_{1}<k_{2},\\
  |\psi_{s_{1}+k_{2}-1}\rangle=&|(2k_{1}-2)-1\rangle_{A}|(s_{1}-1+k_{2})-(s_{1}+k_{2})
 \rangle_{B}\\&|(1+k_{3})\rangle_{C},\\
 |\psi_{t_{1}+k_{2}-1}\rangle=&|(2k_{1}-1)-0\rangle_{A}|(t_{1}-1+k_{2})-(t_{1}+k_{2})
 \rangle_{B}\\&|(1+k_{3})\rangle_{C},\\
  |\psi_{2k_{2}-1}\rangle=&|_{0}+_{(2k_{1}-1)}\rangle_{A}|_{k_{2}}+_{(2k_{2}-1)}\rangle_{B}
 |_{k_{3}}+_{(2k_{3}-1)}\rangle_{C},\\
 |\psi_{i+2k_{2}-1}\rangle=&|(i+k_{1})-(k_{1}-1-i)\rangle_{A}|k_{2}+(1+k_{2})
 \rangle_{B}\\&|k_{3}-(i+k_{3})\rangle_{C},~1\leq i\leq k_{1}-1,\\
 |\psi_{i+k_{1}+2k_{2}-2}\rangle=&|k_{1}-(i+k_{1})+(k_{1}-1-i)-(k_{1}-1)\rangle_{A}\\&
 |k_{2}+(1+k_{2})\rangle_{B}|(j+k_{3})\rangle_{C},\\&1\leq i\leq k_{1}-2,j=i+1,\\
 |\psi_{j+k_{1}+2k_{2}-3}\rangle=&|k_{1}-(1+k_{1})+(k_{1}-2)-(k_{1}-1)\rangle_{A}\\&
 |k_{2}+(1+k_{2})\rangle_{B}|(j+k_{3})\rangle_{C},\\&k_{1}\leq j \leq k_{3}-1,\\
 |\psi_{k_{1}+2k_{2}+k_{3}-3}\rangle=&|(2k_{1}-1)-0\rangle_{A}|k_{2}+(1+k_{2})
 \rangle_{B}\\&|(2+k_{3})-(k_{1}+k_{3})\rangle_{C},~k_{1}<k_{3},\\
 |\psi_{s_{2}+2k_{2}+k_{3}-3}\rangle=&|(2k_{1}-2)-1\rangle_{A}|k_{2}+(1+k_{2})
 \rangle_{B}\\&|(s_{2}+k_{3}-1)-(s_{2}+k_{3})\rangle_{C},\\
 |\psi_{t_{2}+2k_{2}+k_{3}-3}\rangle=&|(2k_{1}-1)-0\rangle_{A}|k_{2}+(1+k_{2})\rangle_{B}\\&
 |(t_{2}+k_{3}-1)-(t_{2}+k_{3})\rangle_{C},\\
\end{aligned}
\end{equation}
where $d_{i}=2k_{i}(i=1,2,3)$, $s_{1}=k_{1}+2r_{1}-1$, $r_{1}=1,2,\cdots,\lfloor\frac{k_{2}-k_{1}}{2}\rfloor$, $s_{2}=k_{1}+2r_{2}-1$, $r_{2}=1,2,\cdots,\lfloor\frac{k_{3}-k_{1}}{2}\rfloor$. When $k_{2}-k_{1}$ is odd, $t_{1}=k_{1}+2r_{1}$; when $k_{2}-k_{1}$ is even, $t_{1}=k_{1}+2u_{1}$, $u_{1}=1,2,\cdots,\frac{k_{2}-k_{1}}{2}-1$. When $k_{3}-k_{1}$ is odd, $t_{2}=k_{1}+2r_{2}$; when $k_{3}-k_{1}$ is even, $t_{2}=k_{1}+2u_{2}$, $u_{2}=1,2,\cdots,\frac{k_{3}-k_{1}}{2}-1$.
\end{theorem}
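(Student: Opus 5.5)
I will follow the same three-part template as in the proof of Theorem \ref{th:1:d1d2d3:odd}, adapted to $d_{i}=2k_{i}$.

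\textbf{Step (i): local distinguishability.} I will give an explicit LOCC protocol. Alice first measures with projectors onto the antisymmetric-type vectors $|i-(2k_{1}-1-i)\rangle_{A}$ ($1\leq i\leq k_{1}-1$) and $|(i+k_{1})-(k_{1}-1-i)\rangle_{A}$, together with the complementary projector. Bob then projects onto $|i\rangle_{B}$ and $|(i+k_{2})\rangle_{B}$, and Charlie measures in the computational basis. At each round I will check that the surviving states stay orthogonal on the measuring party's side (the $A^{*}$, $B^{*}$ bookkeeping). The extra four states $|\phi_{2k_{2}+2k_{3}-3}\rangle,\dots,|\phi_{2k_{2}+2k_{3}}\rangle$ are singled out by their $B$ components $|k_{2}\rangle,|k_{2}+1\rangle$ or their $C$ components $|k_{3}\rangle,|k_{3}+1\rangle$. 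The key point is that the symmetric combinations $|x+y\rangle$ and the antisymmetric ones $|x-y\rangle$ on Alice's side are orthogonal. This is what lets the "folded" states be separated before any unfolding takes place.

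\textbf{Step (ii): local irredundancy.} I will use the partial-trace argument verbatim. Factor $d_{i}=\prod_{l}p_{l}^{(i)}$. Since $d_{i}$ is even and $d_{i}\geq 8$, any nontrivial discarding of a factor leaves dimension at most $d_{i}/2=k_{i}$.

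Next I locate three blocks of states:
\begin{itemize}
\item a block that is pairwise nonorthogonal on $B$ and $C$, so it must stay orthogonal on $A$ alone;
\item a block that must stay orthogonal on $B$ alone;
\item a block that must stay orthogonal on $C$ alone.
\end{itemize}
These blocks must have sizes exceeding the reduced dimension $k_{i}$. The first $k_{1}-1$ states $\{|\phi_{i}\rangle\}$ are too few on $A$ when only a factor 2 is removed. I therefore expect the four added states to be exactly what fixes this, and the $\psi$-states will be needed as well. The $A$-block will consist of the $|\phi_{i}\rangle$ together with the $|\psi_{i}\rangle$ sharing $C$-component $|1+k_{3}\rangle$... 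I expect this to be the main obstacle: finding, for each party, more than $k_{i}$ states that are mutually nonorthogonal on the other two parties. If no single block works, I would instead argue directly that discarding a factor $\mathbb{C}^{2}$ must merge $|j\rangle$ with $|j+k\rangle$ under some unitary. I would then use the extra states with $|k_{2}\rangle,|k_{2}+1\rangle$ and the states with $|0+1\rangle_{B}$ to exhibit a nonorthogonal pair.

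\textbf{Step (iii): activation.} Alice measures $K_{1}^{A}=\sum_{j=0}^{k_{1}-1}|j\rangle\langle j|$ and $K_{2}^{A}=\sum_{j=k_{1}}^{2k_{1}-1}|j\rangle\langle j|$. This is orthogonality preserving because every $A$ component is either supported inside one block or is a sum or difference of a pair split symmetrically across the blocks.

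For outcome 1, the $\phi$-states reduce to the set of Proposition \ref{prop:gn:d1d2d3} in $\mathbb{C}^{k_{1}}\otimes\mathbb{C}^{k_{2}}\otimes\mathbb{C}^{k_{3}}$, together with the four extra states. The $\psi$-states become a relabelled copy of the same proposition's set, living on the upper blocks of $B$ and $C$. Genuine nonlocality in every bipartition then follows by projecting onto the block $\mathbb{C}^{k_{1}}\otimes\mathbb{C}^{k_{2}}\otimes\mathbb{C}^{k_{3}}$ containing a copy, and invoking Proposition \ref{prop:gn:d1d2d3}. Here I will argue that any nontrivial OPLM on the full set restricts to one on a copy, so only trivial OPLMs survive; this requires $k_{1}\geq4$, i.e.\ $d_{1}\geq 8$.

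Outcome 2 is symmetric under $j\mapsto 2k_{1}-1-j$ on Alice's side.

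Type I follows because Charlie's measurement $\{\sum_{j<k_{3}}|j\rangle\langle j|,\ \sum_{j\geq k_{3}}|j\rangle\langle j|\}$ is a nontrivial OPLM that separates the $\phi$-family from the $\psi$-family. The postmeasurement set is therefore locally reducible when all three parties are separated.
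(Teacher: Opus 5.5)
Steps (i) and (iii) follow the paper's template, but Step (ii) fails on Alice's side, and not only for lack of detail.

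The block you propose does not exist. The states $|\phi_{i}\rangle$ and $|\psi_{i'}\rangle$ are already orthogonal on $B$ (lower versus upper block) and on $C$ ($|1\rangle$ versus $|(1+k_{3})\rangle$), so nothing forces their $A$-reductions apart. The four added states all carry the same $A$-component $|0-1+(2k_{1}-2)-(2k_{1}-1)\rangle$, so they cannot enlarge an $A$-block either. Their real job is on $B$ and $C$. Together with $|\phi_{k_{1}}\rangle,\dots,|\phi_{k_{1}+k_{2}-2}\rangle$, the states $|\phi_{2k_{2}+2k_{3}-1}\rangle,|\phi_{2k_{2}+2k_{3}}\rangle$ give $k_{2}+1>d_{2}/2$ states with common $|1\rangle_{C}$, pairwise nonorthogonal $A$-parts, and $B$-parts $|1\rangle,\dots,|(k_{2}+1)\rangle$. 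Likewise $|\phi_{2k_{2}+2k_{3}-3}\rangle,|\phi_{2k_{2}+2k_{3}-2}\rangle$, together with $|\phi_{2k_{1}-2}\rangle,|\phi_{k_{1}+2k_{2}-1}\rangle,\dots,|\phi_{k_{1}+2k_{2}+k_{3}-4}\rangle$, give $k_{3}+1>d_{3}/2$ such states on $C$. This settles $B$ and $C$ by counting, even with arbitrary $U_{B},U_{C}$.

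On $A$, the block $\{|\phi_{i}\rangle\}_{i=1}^{k_{1}-1}$ handles every discard except that of a single factor $\mathbb{C}^{2}$, since anything larger leaves dimension at most $2k_{1}/3<k_{1}-1$. In that remaining case, if a unitary $U_{A}$ may precede the discard (as in the argument of Theorem~\ref{th:1:d1d2d3:odd} you intend to copy), the claim is false. Identify $\frac{1}{\sqrt{2}}(|x\rangle+|(2k_{1}-1-x)\rangle)$ with $|0\rangle|x\rangle$ and $\frac{1}{\sqrt{2}}(|x\rangle-|(2k_{1}-1-x)\rangle)$ with $|1\rangle|x\rangle$, for $0\le x\le k_{1}-1$. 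Every $A$-component in the set is antisymmetric under $x\mapsto 2k_{1}-1-x$, except $|_{0}+_{(2k_{1}-1)}\rangle$ in $|\phi_{2k_{2}-1}\rangle$ and $|\psi_{2k_{2}-1}\rangle$, which is symmetric. So each is a product with this qubit. Tracing the qubit out sends $|x-(2k_{1}-1-x)\rangle\mapsto|x\rangle$ and $|_{0}+_{(2k_{1}-1)}\rangle\mapsto|_{0}+_{(k_{1}-1)}\rangle$, which reproduces exactly the orthogonal post-measurement set of $K_{1}^{A}$. Hence neither a block argument nor your ``merge $|j\rangle$ with $|j+k\rangle$ under some unitary'' fallback can produce a nonorthogonal pair. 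Moreover, the extra states that fallback points to are orthogonal on $B$ or on $C$ to every other state, so they impose no constraint on $A$ at all.

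The paper gives no separate proof of this theorem. For even $d_{1}$ it relies on the $p$-ary argument of Theorem~\ref{th:2:d1d2d3:even}, which treats only the computational, digit-aligned factorization $\mathbb{C}^{d_{1}}=\bigotimes_{l}\mathbb{C}^{p_{l}}$, $d_{1}=2p_{2}\cdots p_{n}$, with no unitary on $A$. You must adopt that restricted notion for Alice. There $x\mapsto 2k_{1}-1-x$ complements every digit, so tracing out one digit turns each $|x-(2k_{1}-1-x)\rangle$ into an explicit mixture, or a pure difference when the two digits coincide. For each digit you then exhibit a pair that is nonorthogonal on $B$ and $C$ with overlapping $A$-reductions:
\begin{itemize}
\item for the leading binary digit, $|\phi_{1}\rangle,|\phi_{k_{1}-2}\rangle$, which both reduce to $\frac{1}{2}(|1\rangle\langle 1|+|(k_{1}-2)\rangle\langle (k_{1}-2)|)$;
\item for a middle digit $a_{l}$, $|\phi_{1}\rangle,|\phi_{\omega}\rangle$ with $\omega=1+\prod_{j>l}p_{j}$;
\item for the last digit, $|\phi_{k_{1}-2}\rangle,|\phi_{k_{1}-1}\rangle$.
\end{itemize}

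There are also some smaller repairs.
\begin{itemize}
\item In Step (iii), the claim that only trivial OPLMs survive on the post-measurement set is false. Charlie's block measurement, which you use for type I, is a nontrivial OPLM there. All you need is that a set containing a subset that is locally indistinguishable across a bipartition is itself indistinguishable across it.
\item Charlie's measurement also sends $|\phi_{2k_{2}+2k_{3}-3}\rangle,|\phi_{2k_{2}+2k_{3}-2}\rangle$ to the $\psi$ side; this is harmless.
\item In Step (i), your two families of Alice projectors coincide up to sign in $k_{1}-2$ vectors, so as written they are not a POVM. The measurement you mean is $\{|x-(2k_{1}-1-x)\rangle\}_{x=0}^{k_{1}-1}$ plus the projector onto the symmetric complement.
\end{itemize}
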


By observing the characteristics of the above constructions, it is not difficult to derive a similar structure in general tripartite system $\mathbb{C}^{d_{1}}\otimes\mathbb{C}^{d_{2}}\otimes\mathbb{C}^{d_{3}}$ ($d_{1}\leq d_{2}\leq d_{3}$, where $d_{i}$ is even more than 8 or odd more than 11).

\section{{Activation of type \uppercase\expandafter{\romannumeral 2}  genuine nonlocality}}\label{sec:5}

In this section, we activate type-\uppercase\expandafter{\romannumeral 2} genuine nonlocality of orthogonal product states in $\mathbb{C}^{d_{1}}\otimes\mathbb{C}^{d_{2}}\otimes\mathbb{C}^{d_{3}}$ .

\subsection{$\mathbb{C}^{d_{1}}\otimes\mathbb{C}^{d_{2}}\otimes\mathbb{C}^{d_{3}} (d_{1}$ is even)}

\begin{theorem}\label{th:2:d1d2d3:even}
The type-\uppercase\expandafter{\romannumeral 2} genuine nonlocality of the following $2d_{2}+2d_{3}-4$ orthogonal product states in $\mathbb{C}^{d_{1}}\otimes\mathbb{C}^{d_{2}}\otimes\mathbb{C}^{d_{3}}$ ($4\leq \frac{d_{1}}{2}\leq d_{2}\leq d_{3}$, $d_{1}$ is even) can be activated via~{\textcolor{blue}{OPLMs by Alice}}:
\begin{equation}
\begin{aligned}
 |\phi_{i}\rangle=&|i-(2k-1-i)\rangle_{A}|0-i\rangle_{B}|1\rangle_{C},\\&1\leq i\leq k-1,\\
 |\phi_{i+k-1}\rangle=&|0-i+(2k-1-i)-(2k-1)\rangle_{A}|j\rangle_{B}\\&
 |1\rangle_{C},~1\leq i\leq k-2,j=i+1;i=k-1,\\&j=1,\\
 |\phi_{j+k-1}\rangle=&|0-1+(2k-2)-(2k-1)\rangle_{A}|j\rangle_{B}
 |1\rangle_{C},\\&k\leq j\leq d_{2}-1,\\
 |\phi_{k+d_{2}-1}\rangle=&|(k-1)-k\rangle_{A}|2-k\rangle_{B}
 |1\rangle_{C},~k<d_{2},\\
 |\phi_{s_{1}+d_{2}-1}\rangle=&|(k-2)-(k+1)\rangle_{A}
 |(s_{1}-1)-s_{1}\rangle_{B}|1\rangle_{C},\\
 |\phi_{t_{1}+d_{2}-1}\rangle=&|(k-1)-k\rangle_{A}
 |(t_{1}-1)-t_{1}\rangle_{B}|1\rangle_{C},\\ |\phi_{2d_{2}-1}\rangle=&|_{0}+_{(d_{1}-1)}\rangle_{A}|_{0}+_{(d_{2}-1)}\rangle_{B}
 |_{0}+_{(d_{3}-1)}\rangle_{C},\\
 |\phi_{i+2d_{2}-1}\rangle=&|i-(2k-1-i)\rangle_{A}|0+1\rangle_{B}|0-i\rangle_{C},
 \\&1\leq i\leq k-1,\\
 |\phi_{i+k+2d_{2}-2}\rangle=&|0-i+(2k-1-i)-(2k-1)\rangle_{A}\\&|0+1\rangle_{B}
 |j\rangle_{C},~1\leq i\leq k-2,j=i+1,\\
 |\phi_{j+k+2d_{2}-3}\rangle=&|0-1+(2k-2)-(2k-1)\rangle_{A}|0+1\rangle_{B}\\&
 |j\rangle_{C},~k\leq j \leq d_{3}-1,\\
 |\phi_{k+2d_{2}+d_{3}-3}\rangle=&|(k-1)-k\rangle_{A}
 |0+1\rangle_{B}|2-k\rangle_{C},~k<d_{3},\\
 |\phi_{s_{2}+2d_{2}+d_{3}-3}\rangle=&|(k-2)-(k+1)\rangle_{A}
 |0+1\rangle_{B}\\&|(s_{2}-1)-s_{2}\rangle_{C},\\
 |\phi_{t_{2}+2d_{2}+d_{3}-3}\rangle=&|(k-1)-k\rangle_{A}
 |0+1\rangle_{B}|(t_{2}-1)-t_{2}\rangle_{C},
\end{aligned}
\end{equation}
where $d_{1}=2k$, $s_{1}=k+2r_{1}-1$, $r_{1}=1,2,\cdots,\lfloor\frac{d_{2}-k}{2}\rfloor$, $s_{2}=k+2r_{2}-1$, $r_{2}=1,2,\cdots,\lfloor\frac{d_{3}-k}{2}\rfloor$. When $d_{2}-k$ is odd, $t_{1}=k+2r_{1}$; when $d_{2}-k$ is even, $t_{1}=k+2u_{1}$, $u_{1}=1,2,\cdots,\frac{d_{2}-k}{2}-1$. When $d_{3}-k$ is odd, $t_{2}=k+2r_{2}$; when $d_{3}-k$ is even, $t_{2}=k+2u_{2}$, $u_{2}=1,2,\cdots,\frac{d_{3}-k}{2}-1$.
\end{theorem}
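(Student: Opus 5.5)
The proof will follow the three-part template of Theorem~\ref{th:1:d1d2d3:odd}: (i) the set is locally distinguishable, (ii) it is locally irredundant, and (iii) an OPLM by Alice sends it, for every outcome, to a set that is genuinely nonlocal and also locally irreducible when all parties are separated. Throughout write $d_{1}=2k$ with $k\geq4$ and $k\leq d_{2}$.

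\emph{Step (i), local distinguishability.} I will give an explicit LOCC protocol in the flowchart style of Fig.~\ref{fig:1:odd:d1d2d3}. Charlie first measures $\{P[|1\rangle]_{C},I-P[|1\rangle]_{C}\}$, which separates the $B$-tiled block from the $C$-tiled block. The stopper state $|\phi_{2d_{2}-1}\rangle$ remains a candidate in both branches. Alice then measures with the rank-one projectors $P[|i-(2k-1-i)\rangle]_{A}$ for $1\le i\le k-1$, together with $P[|(k-2)-(k+1)\rangle]_{A}$, $P[|(k-1)-k\rangle]_{A}$ and the complement. Each outcome leaves a family that Bob (or Charlie) can separate with computational-basis or $|0\pm1\rangle$ projectors. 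The stopper is eliminated at the last step because its local parts are the uniform sums. This step is routine bookkeeping.

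\emph{Step (ii), local irredundancy.} I will use the partial-trace/$p$-ary factorisation argument of Theorem~\ref{th:1:d1d2d3:odd}. Factor each $d_{i}$ into primes and suppose some subsystems $\mathcal{T}_{A},\mathcal{T}_{B},\mathcal{T}_{C}$ are discarded, not all empty. The partial trace preserves nonorthogonality. Hence any family of states that are pairwise nonorthogonal on two parties must stay orthogonal on the third party after the trace.
\begin{itemize}
\item For Bob: the $d_{2}-1$ states $|\phi_{k}\rangle,\dots,|\phi_{k+d_{2}-2}\rangle$ have pairwise nonorthogonal $A$-parts and the common $C$-part $|1\rangle$. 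If $\mathcal{T}_{B}\ne\emptyset$, their $B$-parts would have to be $d_{2}-1$ orthogonal states in dimension at most $d_{2}/2<d_{2}-1$, a contradiction.
\item For Charlie: the analogous $C$-tiled family gives the same contradiction.
\item For Alice: the natural family $|\phi_{1}\rangle,\dots,|\phi_{k-1}\rangle$ has $B$-parts $|0-i\rangle$ that are pairwise nonorthogonal and $C$-part $|1\rangle$. It would force $k-1$ orthogonal vectors into dimension at most $k$, which gives no contradiction. I therefore need at least $k+1$ states that are pairwise nonorthogonal on $B$ and $C$ but orthogonal on $A$. I will first try to enlarge the family with $|\phi_{k+d_{2}-1}\rangle$ and the $s_{1},t_{1}$ states, whose $B$-parts overlap with some of the $|0-i\rangle$. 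If that fails, I will argue directly that the $A$-parts $|i-(2k-1-i)\rangle$ together with $|(k-1)-k\rangle$ and $|(k-2)-(k+1)\rangle$ cannot be jointly realised orthogonally after tracing out a factor of $\mathbb{C}^{2k}$.
\end{itemize}
I expect this $A$-side irredundancy, when $2$ is a prime factor of $d_{1}$, to be the main obstacle.

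\emph{Step (iii), activation.} Alice performs $K^{A}=\{K_{1}^{A},K_{2}^{A}\}$ with $K_{1}^{A}=\sum_{j=0}^{k-1}|j\rangle\langle j|$ and $K_{2}^{A}=\sum_{j=k}^{2k-1}|j\rangle\langle j|$.
\begin{itemize}
\item Outcome 1: $|i-(2k-1-i)\rangle\mapsto|i\rangle$, $|0-i+(2k-1-i)-(2k-1)\rangle\mapsto|0-i\rangle$, $|(k-1)-k\rangle\mapsto|k-1\rangle$, $|(k-2)-(k+1)\rangle\mapsto|k-2\rangle$, and the stopper restricts to $|_{0}+_{(k-1)}\rangle$. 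The result is exactly the set of Proposition~\ref{prop:gn:d1d2d3} in $\mathbb{C}^{k}\otimes\mathbb{C}^{d_{2}}\otimes\mathbb{C}^{d_{3}}$, and the hypothesis $4\le k\le d_{2}\le d_{3}$ is what that proposition needs.
\item Outcome 2: after the relabelling $|j\rangle\mapsto|2k-1-j\rangle$ on Alice's side, the result is the same set.
\end{itemize}
In both outcomes orthogonality is preserved, so the OPLM is valid. By Proposition~\ref{prop:gn:d1d2d3} both post-measurement sets are genuinely nonlocal.

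\emph{Type II.} It remains to show that these sets are not locally reducible when $A$, $B$, $C$ are fully separated. The proof of Proposition~\ref{prop:gn:d1d2d3} shows that in each bipartition every OPLM on either side is trivial. Any single-party OPLM, say by Bob, is also an OPLM of the coalition $BC$ in the bipartition $A|BC$, and is therefore trivial; the same holds for Alice and Charlie. Hence no party can nontrivially measure first, the set is locally irreducible, and its genuine nonlocality is of type~II.
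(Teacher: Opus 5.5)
Your step (iii) and the Bob/Charlie half of step (ii) match the paper's proof, but the Alice half of step (ii) has a genuine gap. You flag it as the main obstacle and leave it open.

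\textbf{Why your fallbacks do not close it, and what the paper does.} Your first fallback fails. The $s_1,t_1$ states have $B$-parts supported on $\{k,\dots,d_2-1\}$, so they are orthogonal on $B$ to every $|0-i\rangle$ with $i\le k-1$. The state $|\phi_{k+d_2-1}\rangle$, with $B$-part $|2-k\rangle$, overlaps on $B$ only with $|\phi_2\rangle$. Also, $|(k-1)-k\rangle$ and $|(k-2)-(k+1)\rangle$ are not new $A$-parts: they are the members $i=k-1$ and $i=k-2$ of $|i-(2k-1-i)\rangle$. The same duplication makes your Alice POVM in step (i) ill-defined, since the complement would not be positive. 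The idea the paper uses instead is to fix the mixed-radix factorization $|\lambda\rangle_A=|\lambda_1\cdots\lambda_{n_1}\rangle_{a_1\cdots a_{n_1}}$ with $p_1^{(1)}=2$. In that factorization $\lambda\mapsto 2k-1-\lambda$ is digitwise complementation, $\lambda_l\mapsto p_l^{(1)}-1-\lambda_l$. For each digit that could be discarded, the paper then exhibits one explicit pair whose reduced $A$-parts overlap while their $B$- and $C$-parts are nonorthogonal:
\begin{itemize}
\item for $a_1$: $|\phi_1\rangle,|\phi_{k-2}\rangle$;
\item for a middle digit $a_l$: $|\phi_1\rangle,|\phi_{\omega}\rangle$, where $\omega$ has digits $0\cdots1\cdots01$ with the $1$ in position $l$;
\item for $a_{n_1}$: $|\phi_{k-2}\rangle,|\phi_{k-1}\rangle$.
\end{itemize}
Since partial trace preserves nonorthogonality, each pair rules out every $\mathcal{T}_A$ containing that digit, and no dimension count is needed.

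\textbf{This only works for that fixed factorization.} If, as in the proof of Theorem~\ref{th:1:d1d2d3:odd}, an arbitrary $U_A$ may precede the discarding (which is how your plan is set up), the $A$-side claim is false, so your second fallback cannot succeed. Take $U_A:\frac{1}{\sqrt2}(|i\rangle-|2k-1-i\rangle)\mapsto|i\rangle|0\rangle$ and $\frac{1}{\sqrt2}(|i\rangle+|2k-1-i\rangle)\mapsto|i\rangle|1\rangle$ for $0\le i\le k-1$, i.e.\ $\mathbb{C}^{2k}\cong\mathbb{C}^{k}\otimes\mathbb{C}^{2}$. Every non-stopper $A$-part is antisymmetric under $j\mapsto 2k-1-j$; for example $|0-i+(2k-1-i)-(2k-1)\rangle=(|0\rangle-|2k-1\rangle)-(|i\rangle-|2k-1-i\rangle)$. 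The stopper's $A$-part is symmetric. So every state becomes a product with the qubit. Discarding the qubit leaves exactly the outcome-$1$ set of $K^A$, namely the orthogonal set of Proposition~\ref{prop:gn:d1d2d3} in $\mathbb{C}^{k}\otimes\mathbb{C}^{d_2}\otimes\mathbb{C}^{d_3}$. You therefore have to restrict explicitly to the computational prime-factor decomposition, as the paper implicitly does.

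\textbf{A smaller flaw in your type-II step.} You assume that the proof of Proposition~\ref{prop:gn:d1d2d3} makes every OPLM on each side of each bipartition trivial. It does not; that proof gets bipartite indistinguishability from a subset argument. As stated, your assumption is even false. The $BC$-parts span a subspace $W$ of dimension at most $d_2+d_3+1<d_2d_3$, so $\{P_W,I-P_W\}$ is a nontrivial OPLM for $BC$. You only need single-party triviality, and that follows directly:
\begin{itemize}
\item An OPLM by Alice or Bob must preserve orthogonality on the $B$-tile-plus-stopper subset.
\item An OPLM by Charlie must preserve orthogonality on the $C$-tile subset used in that proof.
\end{itemize}
On these subsets the remaining party's overlaps are nonzero, so a Lemma~\ref{lemma}-type table argument applies. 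Triviality of single-party OPLMs there is what the paper asserts.
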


\begin{proof}
We need to prove three features: the local distinguishability, local irredundancy and local activatability.

$(i)$ {\it Local distinguishability}. Define the following measurements: $M_{1}^{A}=\{P_{i}^{A}=P[|i-(2k-1-i)\rangle]_{A}, i=1,2,\cdots,k-1, P_{k}^{A}=I-\sum_{i=1}^{k-1}P_{i}^{A}\}$ for Alice, $M_{1}^{B}=\{P_{i}^{B}=P[|i\rangle]_{B}, i=1,\cdots,d_{2}-1, P_{d_{2}}^{B}=I-\sum_{i=1}^{d_{2}-1}P_{i}^{B}\}$ for Bob,  $M_{1}^{C}=\{P_{i}^{C}=P[|i\rangle]_{C}, i\in \mathbb{Z}_{d_{3}}\}$ for Charlie. The details of the discrimination protocol are given in Fig \ref{fig:2:even:d1d2d3} .
\begin{figure*}[h]
	\centering	\includegraphics[scale=0.5]{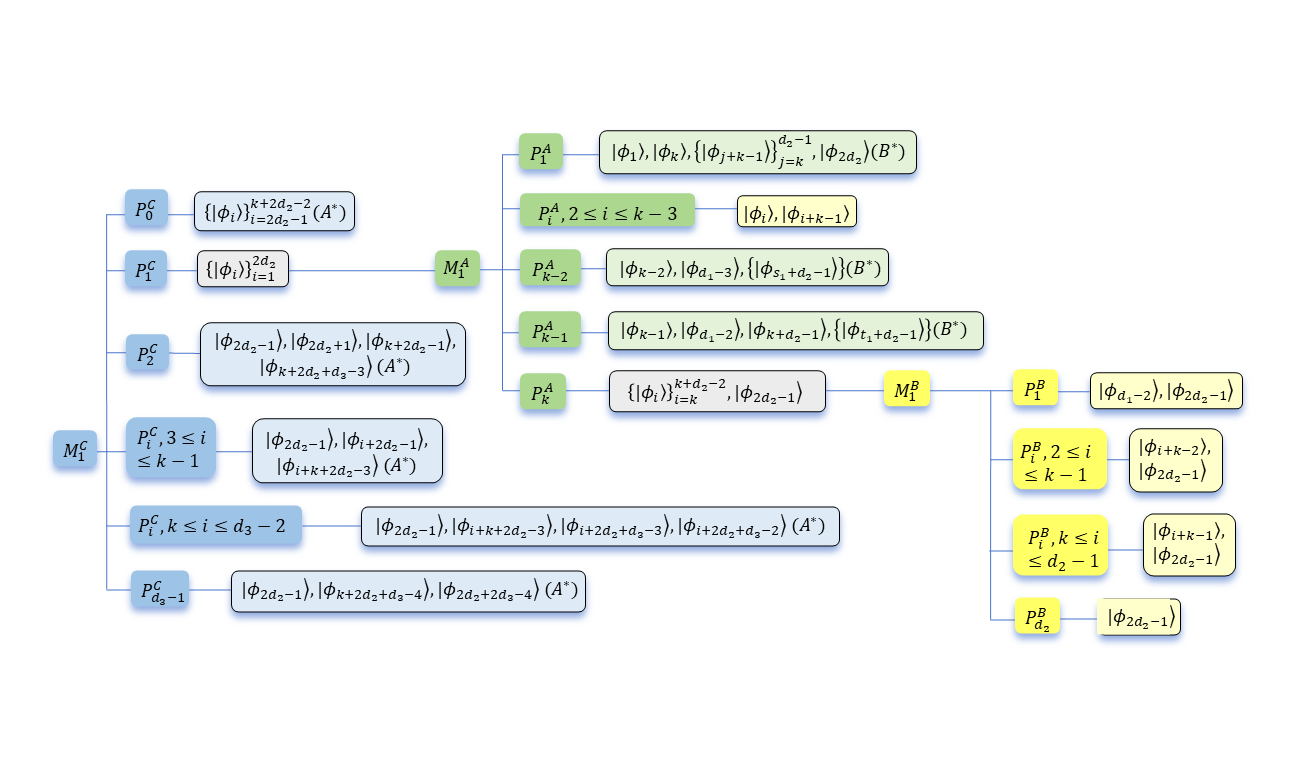}
	\caption{Distinction details for $\mathbb{C}^{d_{1}}\otimes\mathbb{C}^{d_{2}}\otimes\mathbb{C}^{d_{3}}$ ($d_{1}$ is even)}
         \label{fig:2:even:d1d2d3}
\end{figure*}

$(ii)$ {\it Local irredundancy}. Let $d_{1}=p_{1}^{(1)}p_{2}^{(1)}\cdots p_{n_{1}}^{(1)}$, where $p_{l}^{(1)}$ is a prime factor of $d_{1}$ (here we assume that $2= p_{1}^{(1)}\leq p_{2}^{(1)}\leq \cdots \leq p_{n_{1}}^{(1)}, n_{1}\geq2$). Suppose that  $\mathcal{H}_{A}=\mathbb{C}^{d_{1}}$ can be factored into an $n_{1}$-partite system $\otimes_{l=1}^{n_{1}}\mathcal{H}_{a_{l}}$ with $\mathcal{H}_{a_{l}}=\mathbb{C}^{p_{l}^{(1)}}$. The corresponding basis representation is $|[\sum_{l=1}^{n_{1}-1}(\lambda_{l} \prod_{j=l+1}^{n_{1}} p_{j}^{(1)})+\lambda_{n_{1}}]\rangle_{A}=|\lambda_{1}\lambda_{2}\cdots\lambda_{n_{1}}\rangle_{a_{1}a_{2}\cdots a_{n_{1}}}$, $\lambda_{l}\in \mathbb{Z}_{p_{l}^{(1)}}$. From this, it can be inferred that if $|\lambda\rangle_{A}=|\lambda_{1}\lambda_{2}\cdots\lambda_{n_{1}}\rangle_{a_{1}a_{2}\cdots a_{n_{1}}}$, then $|(2k-1-\lambda)\rangle_{A}=|(p_{1}^{(1)}-1-\lambda_{1})(p_{2}^{(1)}-1-\lambda_{2})\cdots(p_{n_{1}}^{(1)}-1-\lambda_{n_{1}})\rangle_{a_{1}a_{2}\cdots a_{n_{1}}}$. When Alice acts as the discarded party, this representation of $p$-ary numeral systems can quickly lock onto the target states, which will become nonorthogonal. Let $d_{i}=p_{1}^{(i)}p_{2}^{(i)}\cdots p_{n_{i}}^{(i)}$, where $p_{l}^{(i)}$ is a prime factor of $d_{i}$ (here we assume that $2\leq p_{1}^{(i)}\leq p_{2}^{(i)}\leq \cdots \leq p_{n_{i}}^{(i)}, n_{i}\geq2, i=2,3$).

Suppose $\mathcal{H}_{B}=\mathbb{C}^{d_{2}}$ can be factored into an $n_{2}$-partite system $\otimes_{l=1}^{n_{2}}\mathcal{H}_{b_{l}}$ with $\mathcal{H}_{b_{l}}=\mathbb{C}^{p_{l}^{(2)}}$, and
$\mathcal{H}_{C}=\mathbb{C}^{d_{3}}$ can be factored into an $n_{3}$-partite system $\otimes_{l=1}^{n_{3}}\mathcal{H}_{c_{l}}$ with $\mathcal{H}_{c_{l}}=\mathbb{C}^{p_{l}^{(3)}}$. Denote $|\phi_{s} \rangle=|\phi_{s} \rangle_{A}|\phi_{s} \rangle_{B}|\phi_{s} \rangle_{C}$ and the discarded subsystems by $\mathcal{T}_{A} \subseteq \{a_{1},a_{2}\cdots,a_{n_{1}}\}$, $\mathcal{T}_{B} \subseteq \{b_{1},b_{2},\cdots,b_{n_{2}}\}$ and $\mathcal{T}_{C} \subseteq \{c_{1},c_{2},\cdots,c_{n_{3}}\}$.
Consider the case $\mathcal{T}_{A}\neq\emptyset$. If Alice discards subsystem $a_{1}$, we have
$|\phi_{1}\rangle \to \frac{1}{4}[(|0\cdots 01\rangle \langle 0\cdots 01|+|(p_{2}^{(1)}-1)\cdots (p_{n_{1}-1}^{(1)}-1)(p_{n_{1}}^{(1)}-2)\rangle \langle (p_{2}^{(1)}-1)\cdots (p_{n_{1}-1}^{(1)}-1)(p_{n_{1}}^{(1)}-2)|)_{a_{2}\cdots a_{n_{1}-1}a_{n_{1}}} \otimes |0-1\rangle_{B} \langle 0-1|\otimes |1\rangle_{C} \langle 1|]$,
$|\phi_{k-2}\rangle \to \frac{1}{4}[(|(p_{2}^{(1)}-1)\cdots (p_{n_{1}-1}^{(1)}-1)(p_{n_{1}}^{(1)}-2)\rangle \langle (p_{2}^{(1)}-1)\cdots (p_{n_{1}-1}^{(1)}-1)(p_{n_{1}}^{(1)}-2)|+|0\cdots 01\rangle \langle 0\cdots 01|)_{a_{2}\cdots a_{n_{1}-1}a_{n_{1}}} \otimes |0-(k-2)\rangle_{B} \langle 0-(k-2)|\otimes |1\rangle_{C} \langle 1|]$, which will become nonorthogonal. Since partial trace operations preserve nonorthogonality, the two states remain nonorthogonal after discarding the subsystems $\mathcal{T}_{A}\cup \mathcal{T}_{B}\cup \mathcal{T}_{C}$ (where $a_{1}\in \mathcal{T}_{A}$).

Assume that the subsystem $a_{l}$ $(l=2,3,\cdots,n_{1}-1)$ is discarded. Let $|\upsilon^{(l)}\rangle=|00\cdots 0\cdots 01\rangle_{a_{1}a_{2}\cdots a_{l}\cdots a_{n_{1}-1}a_{n_{1}}}$, $|\omega^{(l)}\rangle=|00\cdots 1\cdots 01\rangle_{a_{1}a_{2}\cdots a_{l}\cdots a_{n_{1}-1}a_{n_{1}}}$. Then $|\phi_{\upsilon^{(l)}}\rangle \to \frac{1}{4}[(|00\cdots00\cdots 01\rangle \langle 00\cdots00\cdots 01|+|1(p_{2}^{(1)}-1)\cdots (p_{l-1}^{(1)}-1)(p_{l+1}^{(1)}-1)\cdots (p_{n_{1}-1}^{(1)}-1)(p_{n_{1}}^{(1)}-2)\rangle \langle 1(p_{2}^{(1)}-1)\cdots (p_{l-1}^{(1)}-1)(p_{l+1}^{(1)}-1)\cdots (p_{n_{1}-1}^{(1)}-1)(p_{n_{1}}^{(1)}-2)|)_{a_{1}a_{2}\cdots a_{l-1}a_{l+1}\cdots a_{n_{1}-1}a_{n_{1}}} \otimes |0-\upsilon^{(l)}\rangle_{B} \langle 0-\upsilon^{(l)}|\otimes |1\rangle_{C} \langle 1|]$ and
$|\phi_{\omega^{(l)}}\rangle \to \frac{1}{4}[(|00\cdots00\cdots 01\rangle \langle 00\cdots00\cdots 01|+|1(p_{2}^{(1)}-1)\cdots (p_{l-1}^{(1)}-1)(p_{l+1}^{(1)}-1)\cdots (p_{n_{1}-1}^{(1)}-1)(p_{n_{1}}^{(1)}-2)\rangle \langle 1(p_{2}^{(1)}-1)\cdots (p_{l-1}^{(1)}-1)(p_{l+1}^{(1)}-1)\cdots (p_{n_{1}-1}^{(1)}-1)(p_{n_{1}}^{(1)}-2)|)_{a_{1}a_{2}\cdots a_{l-1}a_{l+1}\cdots a_{n_{1}-1}a_{n_{1}}}\otimes |0-\omega^{(l)}\rangle_{B} \langle 0-\omega^{(l)}|\otimes |1\rangle_{C} \langle 1|]$ $(p_{l}^{(1)}\neq 3)$ or $\frac{1}{4}[|00\cdots00\cdots$ $01-1(p_{2}^{(1)}-1)\cdots (p_{l-1}^{(1)}-1)(p_{l+1}^{(1)}-1)\cdots (p_{n_{1}-1}^{(1)}-1)(p_{n_{1}}^{(1)}-2)\rangle_{a_{1}a_{2}\cdots a_{l-1}a_{l+1}\cdots a_{n_{1}-1}a_{n_{1}}} \langle 00\cdots00\cdots$ $ 01-1(p_{2}^{(1)}-1)\cdots (p_{l-1}^{(1)}-1)(p_{l+1}^{(1)}-1)\cdots (p_{n_{1}-1}^{(1)}-1)(p_{n_{1}}^{(1)}-2)| \otimes |0-\omega^{(l)}\rangle_{B} \langle 0-\omega^{(l)}|\otimes |1\rangle_{C} \langle 1|]$ $(p_{l}^{(1)}= 3)$. The two states remain nonorthogonal after discarding the subsystems $\mathcal{T}_{A}\cup \mathcal{T}_{B}\cup \mathcal{T}_{C}$ with $a_{l}\in \mathcal{T}_{A}$.

If Alice discards subsystem $a_{n_{1}}$, we have $|\phi_{k-2}\rangle \to \frac{1}{4}[(|0(p_{2}^{(1)}-1)\cdots (p_{n_{1}-1}^{(1)}-1)\rangle \langle 0(p_{2}^{(1)}-1)\cdots (p_{n_{1}-1}^{(1)}-1)|+|10\cdots 0\rangle \langle 10\cdots 0|)_{a_{1}a_{2}\cdots a_{n_{1}-1}} \otimes |0-(k-2)\rangle_{B} \langle 0-(k-2)|\otimes |1\rangle_{C} \langle 1|]$ $(p_{n_{1}}^{(1)}\neq 3)$ or $\frac{1}{4}[|0(p_{2}^{(1)}-1)\cdots (p_{n_{1}-1}^{(1)}-1)-10\cdots 0\rangle_{a_{1}a_{2}\cdots a_{n_{1}-1}} \langle 0(p_{2}^{(1)}-1)\cdots (p_{n_{1}-1}^{(1)}-1)-10\cdots 0| \otimes |0-(k-2)\rangle_{B} \langle 0-(k-2)|\otimes |1\rangle_{C} \langle 1|]$ $(p_{n_{1}}^{(1)}=3)$,
$|\phi_{k-1}\rangle \to \frac{1}{4}[(|0(p_{2}^{(1)}-1)\cdots (p_{n_{1}-1}^{(1)}-1)\rangle \langle 0(p_{2}^{(1)}-1)\cdots (p_{n_{1}-1}^{(1)}-1)|+|10\cdots 0\rangle \langle 10\cdots 0|)_{a_{1}a_{2}\cdots a_{n_{1}-1}} \otimes |0-(k-1)\rangle_{B} \langle 0-(k-1)|\otimes |1\rangle_{C} \langle 1|]$, which are nonorthogonal after discarding the subsystems $\mathcal{T}_{A}\cup \mathcal{T}_{B}\cup \mathcal{T}_{C}$ with $a_{n_{1}}\in \mathcal{T}_{A}$.

Now consider the case $\mathcal{T}_{A}=\emptyset$. We obtain  $\langle\phi_{s}|\phi_{t}\rangle_{A}\neq 0$ and $\langle\phi_{s}|\phi_{t}\rangle_{C}\neq 0$ for $s,t \in\{k,\cdots,k+d_{2}-2\}$. For $s,t\in\{d_{1}-2,k+2d_{2}-1,\cdots,k+2d_{2}+d_{3}-4\}$, we have $\langle\phi_{s}|\phi_{t}\rangle_{A}\neq 0$ and $\langle\phi_{s}|\phi_{t}\rangle_{B}\neq 0$. If the set is local redundancy, there exist nonempty sets among $\mathcal{T}_{B}$, $\mathcal{T}_{C}$ and some unitary matrices $U_{A}\in U(d_{1}), U_{B}\in U(d_{2})$, $U_{C}\in U(d_{3})$ such that $\{Tr_{\mathcal{T}_{B}\cup\mathcal{T}_{C}}[(U_{A}\otimes U_{B}\otimes U_{C})|\phi_{s}\rangle \langle\phi_{s}|(U_{A}^{\dagger}\otimes U_{B}^{\dagger}\otimes U_{C}^{\dagger})]\}_{s=1}^{2d_{2}+2d_{3}-4}$ are pairwise orthogonal, which is equivalent to the orthogonality of the set $\{(U_{A}|\phi_{s}\rangle_{A}\langle\phi_{s}|U_{A}^{\dagger})
\otimes Tr_{\mathcal{T}_{B}}(U_{B}|\phi_{s}\rangle_{B}\langle\phi_{s}|U_{B}^{\dagger})
\otimes Tr_{\mathcal{T}_{C}}(U_{C}|\phi_{s}\rangle_{C}\langle\phi_{s}|U_{C}^{\dagger})
\}_{s=1}^{2d_{2}+2d_{3}-4}$. Since the partial trace operation preserves nonorthogonality, the following sets fulfill the conditions: $\{Tr_{\mathcal{T}_{B}}(U_{B}|\phi_{s}\rangle_{B} \langle\phi_{s}|U_{B}^{\dagger})\}_{s=k}^{k+d_{2}-2}$, $\{Tr_{\mathcal{T}_{C}}(U_{C}|\phi_{s}\rangle_{C} \langle\phi_{s}|U_{C}^{\dagger}), s=d_{1}-2,k+2d_{2}-1,\cdots,k+2d_{2}+d_{3}-4\}$. Without loss of generality, we assume that $\mathcal{T}_{B}\neq\emptyset$. Therefore, the resulting states $\{Tr_{\mathcal{T}_{B}}(U_{B}|\phi_{s}\rangle_{B} \langle\phi_{s}|U_{B}^{\dagger})\}_{s=k}^{k+d_{2}-2}$ are pairwise orthogonal in the systems corresponding to $\{b_{1},b_{2},\cdots,b_{n_{2}}\}\setminus\mathcal{T}_{B}$ whose dimension is at most $\frac{d_{2}}{2}<d_{2}-1$. Hence we arrive at a contradiction.

$(iii)$ {\it Genuine nonlocality can be activated}. Suppose that Alice performs the measurement $K^{A}=\{ K_{1}^{A}=\sum_{j=0}^{k-1}|j\rangle_{A} \langle j|, K_{2}^{A}=\sum_{j=k}^{2k-1}|j\rangle_{A} \langle j|\}$. Either the $K_{1}^{A}$
or $K_{2}^{A}$ clicks, by Proposition \ref{prop:gn:d1d2d3}, the set will be converted into a type-\uppercase\expandafter{\romannumeral 2} genuinely nonlocal set. This completes the proof.
\end{proof}

\subsection{$\mathbb{C}^{d_{1}}\otimes\mathbb{C}^{d_{2}}\otimes\mathbb{C}^{d_{3}}$ ($d_{1}$ is odd)}
\begin{theorem}\label{th:2:d1d2d3:odd}
The type-\uppercase\expandafter{\romannumeral 2} genuine nonlocality of the following $2d_{2}+2d_{3}-4$ orthogonal product states in $\mathbb{C}^{d_{1}}\otimes\mathbb{C}^{d_{2}}\otimes\mathbb{C}^{d_{3}}$ ($5\leq \frac{d_{1}+1}{2}\leq d_{2}\leq d_{3}$, $d_{1}$ is odd) can be activated via~{\textcolor{blue}{OPLMs by Alice}}:
\begin{equation*}
\begin{aligned}
|\phi_{1}\rangle=&|(1+k)-(k-2)\rangle_{A}|0-1\rangle_{B}|1\rangle_{C},\\
|\phi_{2}\rangle=&|(2+k)-(k-1)\rangle_{A}|0-2\rangle_{B}|1\rangle_{C},\\
|\phi_{i}\rangle=&|(i+k)-(k-i)\rangle_{A}|0-i\rangle_{B}|1\rangle_{C},3\leq i\leq k,\\
|\phi_{1+k}\rangle=&|k-(1+k)+(k-2)-1\rangle_{A}|2\rangle_{B}|1\rangle_{C},\\
|\phi_{i+k}\rangle=&|k-(i+k)+(k-i)-(k-1)\rangle_{A}|j\rangle_{B}\\&
|1\rangle_{C},~2\leq i\leq k-1,j=i+1;i=k,j=1,\\
|\phi_{j+k}\rangle=&|k-(1+k)+(k-2)-(k-1)\rangle_{A}|j\rangle_{B}\\&
|1\rangle_{C},~k+1\leq j\leq d_{2}-1,\\
|\phi_{k+d_{2}}\rangle=&|2k-0\rangle_{A}|2-(k+1)\rangle_{B}|1\rangle_{C},~k+1<d_{2},\\
|\phi_{s_{1}+d_{2}-1}\rangle=&|(2k-1)-1\rangle_{A}|(s_{1}-1)-s_{1}\rangle_{B}|1\rangle_{C},\\
|\phi_{t_{1}+d_{2}-1}\rangle=&|2k-0\rangle_{A}|(t_{1}-1)-t_{1}\rangle_{B}|1\rangle_{C},\\
|\phi_{2d_{2}-1}\rangle=&|_{0}+_{(d_{1}-1)}\rangle_{A}|_{0}+_{(d_{2}-1)}\rangle_{B}|_{0}+_{(d_{3}-1)}\rangle_{C},\\
|\phi_{2d_{2}}\rangle=&|(1+k)-(k-2)\rangle_{A}|0+1\rangle_{B}|0-1\rangle_{C},\\
|\phi_{2d_{2}+1}\rangle=&|(2+k)-(k-1)\rangle_{A}|0+1\rangle_{B}|0-2\rangle_{C},\\
|\phi_{i+2d_{2}-1}\rangle=&|(i+k)-(k-i)\rangle_{A}|0+1\rangle_{B}|0-i\rangle_{C},\\&3\leq i\leq k,\\
|\phi_{k+2d_{2}}\rangle=&|k-(k+1)+(k-2)-1\rangle_{A}|0+1\rangle_{B}|2\rangle_{C},\\
|\phi_{i+k+2d_{2}-1}\rangle=&|k-(k+i)+(k-i)-(k-1)\rangle_{A}\\
\end{aligned}
\end{equation*}
\begin{equation}
\begin{aligned}
&|0+1\rangle_{B}|j\rangle_{C},~2\leq i\leq k-1,j=i+1,\\
|\phi_{j+k+2d_{2}-2}\rangle=&|k-(k+1)+(k-2)-(k-1)\rangle_{A}\\&|0+1\rangle_{B}|j\rangle_{C},~k+1\leq j\leq d_{3}-1,\\
|\phi_{k+2d_{2}+d_{3}-2}\rangle=&|2k-0\rangle_{A}|0+1\rangle_{B}|2-(k+1)\rangle_{C},\\&k+1<d_{3},\\
|\phi_{s_{2}+2d_{2}+d_{3}-3}\rangle=&|(2k-1)-1\rangle_{A}|0+1\rangle_{B}|(s_{2}-1)-s_{2}\rangle_{C},\\
|\phi_{t_{2}+2d_{2}+d_{3}-3}\rangle=&|2k-0\rangle_{A}|0+1\rangle_{B}|(t_{2}-1)-t_{2}\rangle_{C},
\end{aligned}
\end{equation}
where $d_{1}=2k+1$, $s_{1}=k+2r_{1}$, $r_{1}=1,2,\cdots,\lfloor\frac{d_{2}-k-1}{2}\rfloor$, $s_{2}=k+2r_{2}$, $r_{2}=1,2,\cdots,\lfloor\frac{d_{3}-k-1}{2}\rfloor$. When $d_{2}-k$ is even, $t_{1}=k+2r_{1}+1$; when $d_{2}-k$ is odd, $t_{1}=k+2u_{1}+1$, $u_{1}=1,2,\cdots,\frac{d_{2}-k-1}{2}-1$. When $d_{3}-k$ is even, $t_{2}=k+2r_{2}+1$; when $d_{3}-k$ is odd, $t_{2}=k+2u_{2}+1$, $u_{2}=1,2,\cdots,\frac{d_{3}-k-1}{2}-1$.
\end{theorem}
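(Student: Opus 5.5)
The plan follows the three-part scheme used for Theorem~\ref{th:2:d1d2d3:even}: (i) local distinguishability of the given set, (ii) local irredundancy, and (iii) activation by a two-outcome OPLM on Alice's side. With $d_{1}=2k+1$, the structure of the set mirrors Proposition~\ref{prop:gn:d1d2d3:2}. Each Alice component is a superposition of a vector in $\mathrm{span}\{|0\rangle,\dots,|k-1\rangle\}$ and one in $\mathrm{span}\{|k\rangle,\dots,|2k\rangle\}$, arranged through the reflection $\lambda\mapsto 2k-\lambda$ (for instance $|(i+k)-(k-i)\rangle$).

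\emph{Local distinguishability.} Alice first measures $M_{1}^{A}=\{P[|(i+k)-(k-i)\rangle]_{A}$ for $i=3,\dots,k$, $P[|(1+k)-(k-2)\rangle]_{A}$, $P[|(2+k)-(k-1)\rangle]_{A}$, together with the complement$\}$. Each such projector selects exactly the pair $\{|\phi_{i}\rangle,|\phi_{i+2d_{2}-1}\rangle\}$ (respectively $\{|\phi_{1}\rangle,|\phi_{2d_{2}}\rangle\}$, $\{|\phi_{2}\rangle,|\phi_{2d_{2}+1}\rangle\}$). These pairs are then separated by Charlie measuring in the computational basis. In the complement branch, Bob measures $\{P[|j\rangle]_{B}\}_{j=1}^{d_{2}-1}$ together with the remainder, and Charlie measures $\{P[|j\rangle]_{C}\}$. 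This isolates the tiles $|j\rangle_{B}|1\rangle_{C}$ and $|0+1\rangle_{B}|j\rangle_{C}$, whose Alice parts are mutually orthogonal and can therefore be measured by Alice. The remaining states, namely the ``stopper'' $|\phi_{2d_{2}-1}\rangle$ and the $|2k-0\rangle$, $|(2k-1)-1\rangle$ tiles, are separated by subsequent $A^{*}$/$B^{*}$ steps. This is exactly the flowchart argument of Theorem~\ref{th:2:d1d2d3:even} with indices relabeled, so I would present it as a figure.

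\emph{Local irredundancy.} Here I would copy the partial-trace argument, with one change. Since $d_{1}$ is odd, all prime factors $p_{l}^{(1)}$ are at least $3$. In the $p$-ary digit representation, $|2k-\lambda\rangle_{A}$ again corresponds to the digitwise complement $\lambda_{l}\mapsto p_{l}-1-\lambda_{l}$. Hence discarding any nonempty subset $\mathcal{T}_{A}$ makes some pair such as $|(i+k)-(k-i)\rangle$ and $|(i'+k)-(k-i')\rangle$ share a reduced support. I would exhibit this explicitly for $a_{1}$, for a middle $a_{l}$, and for $a_{n_{1}}$, treating $p_{l}=3$ separately as before. The two states chosen must also have nonorthogonal B and C parts, e.g.\ two states of the family $|0-i\rangle_{B}|1\rangle_{C}$, so that nonorthogonality survives; partial trace preserves nonorthogonality. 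If $d_{1}$ is prime, no factorization of $\mathcal{H}_{A}$ exists, and the case $\mathcal{T}_{A}\neq\emptyset$ is vacuous.

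For $\mathcal{T}_{A}=\emptyset$, I would use the family $\{|\phi_{s}\rangle\}$ with B parts $|j\rangle_{B}$ and common C part $|1\rangle_{C}$ and pairwise nonorthogonal A parts. This gives $d_{2}-1$ (or so) pairwise orthogonal reduced B states, living in dimension at most $d_{2}/2<d_{2}-1$, a contradiction. The family $|0+1\rangle_{B}|j\rangle_{C}$ gives the same contradiction for C.

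\emph{Activation.} Alice measures $K_{1}^{A}=\sum_{j=0}^{k-1}|j\rangle\langle j|$ and $K_{2}^{A}=\sum_{j=k}^{2k}|j\rangle\langle j|$. For outcome $2$, every $|(i+k)-(k-i)\rangle$ becomes $|i+k\rangle$ and $|k-(i+k)+(k-i)-(k-1)\rangle$ becomes $|k-(i+k)\rangle$, while the $|2k-0\rangle$ tiles become $|2k\rangle$. After relabeling $|k+m\rangle\mapsto|m\rangle$, this is exactly the set of Proposition~\ref{prop:gn:d1d2d3:2} in $\mathbb{C}^{k+1}\otimes\mathbb{C}^{d_{2}}\otimes\mathbb{C}^{d_{3}}$; its hypothesis $5\le (k+1)+1\le d_{2}$ follows from $\frac{d_{1}+1}{2}\le d_{2}$ together with the tile conventions. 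For outcome $1$, relabeling $|k-m\rangle$ appropriately yields the same form in dimension $k$. The main obstacle is checking that these post-measurement sets are type~II, i.e.\ locally irreducible when all three parties are separated, and not merely genuinely nonlocal.

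My approach to type~II is to show that every OPLM by each single party is trivial, by the standard Walgate--Hardy matrix argument. Proposition~\ref{prop:gn:d1d2d3:2} is built from the irreducible bipartite set of Lemma~\ref{lemma} on both the $AB$ and $AC$ slices, with the stopper state. First, the $AB$-slice states (all with $|1\rangle_{C}$) force Alice's and Bob's POVM elements to be trivial, exactly as in the proof of Lemma~\ref{lemma}. Second, the $|0+1\rangle_{B}|j\rangle_{C}$ states force Charlie's POVM element to be trivial. The delicate point is whether the set of Lemma~\ref{lemma} is itself irreducible, rather than merely indistinguishable. I expect this to be the hardest step, and would verify it directly via the off-diagonal entries of the POVM elements.
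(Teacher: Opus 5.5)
\textbf{Gap in the activation step: the outcome-2 set is misidentified.} Your three-part outline is the paper's (Theorem~4 is referred back to Theorem~3, with the same $K^{A}=\{K_{1}^{A},K_{2}^{A}\}$). But the activation step rests on a wrong identification.

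Under $K_{2}^{A}$, the first family becomes $|k+i\rangle_{A}|0-i\rangle_{B}|1\rangle_{C}$ for $1\le i\le k$ only, and $|k\rangle_{A}$ never occurs on its own. After $|k+m\rangle\mapsto|m\rangle$ the set reads $|i\rangle|0-i\rangle|1\rangle$ $(1\le i\le k)$, $|0-i\rangle|j\rangle|1\rangle$, $|k\rangle|2-(k+1)\rangle|1\rangle$, $|k-1\rangle|(s_{1}-1)-s_{1}\rangle|1\rangle$, $|k\rangle|(t_{1}-1)-t_{1}\rangle|1\rangle,\dots$. This is the set of Proposition~1 with $d_{1}\to k+1$ (the ranges of $s_{1},t_{1}$ match too), not that of Proposition~2. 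Proposition~2 in $\mathbb{C}^{k+1}$ would need $k+1$ first-family states $|0-i\rangle_{B}$, $i\le k+1$, using every Alice basis vector.

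This is why your hypothesis check cannot be closed. Proposition~2 in dimension $k+1$ requires $k+2\le d_{2}$, which does not follow from $\frac{d_{1}+1}{2}=k+1\le d_{2}$ (take $d_{2}=k+1$). Conversely, outcome $K_{1}^{A}$ needs no relabeling: it is literally Proposition~2 with $d_{1}\to k$, and its hypothesis $5\le k+1\le d_{2}$ is exactly the theorem's.

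The correct argument is therefore:
\begin{itemize}
\item outcome~1 gives a Proposition~2 set in $\mathbb{C}^{k}\otimes\mathbb{C}^{d_{2}}\otimes\mathbb{C}^{d_{3}}$;
\item outcome~2 gives a Proposition~1 set in $\mathbb{C}^{k+1}\otimes\mathbb{C}^{d_{2}}\otimes\mathbb{C}^{d_{3}}$, needing only $4\le k+1\le d_{2}$;
\item each is type~II by its own proposition.
\end{itemize}
This asymmetry is the reason Lemma~1 and Proposition~2 are in the paper. The $k$-dimensional block must carry all $k$ first-family states, and Proposition~1 (which has only $d_{1}-1$ of them) cannot do that.

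\textbf{Consequence for type~II.} Your argument reruns the Lemma~1 matrix computation on the $AB$ slice, so it covers only outcome~1. For outcome~2 you need triviality of all OPLMs for the $AB$ and $AC$ slices of the Proposition~1 set, which is a different bipartite set; the paper takes this from Proposition~1. The step you single out as hardest is not open. The proof of Lemma~1 already shows that every orthogonality-preserving POVM element of Alice and of Bob is proportional to the identity. That is irreducibility, not merely indistinguishability.

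\textbf{Smaller corrections.}

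In (i), Alice's rank-one projectors do not isolate only the pairs you list. For example, $P[|(1+k)-(k-2)\rangle]$ also keeps $|\phi_{1+k}\rangle$, $|\phi_{2+k}\rangle$, $|\phi_{j+k}\rangle$ and their $C$-analogues. Also, $|2k-0\rangle$ and $|(2k-1)-1\rangle$ are themselves the projectors for $i=k$ and $i=k-1$, so those tiles never reach the complement branch. The measurement is still orthogonality preserving, and the rank-one branches can be finished by Charlie and then Bob, but the flowchart has to be redone.

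In (ii), the $p$-ary analysis is unnecessary for odd $d_{1}$, and your sample pairs can fail. For $d_{1}=9$, discarding the leading ternary digit sends $|7-1\rangle$ and $|8-0\rangle$ to orthogonal states. Like the $p$-ary step in Theorem~3, it also ignores $U_{A}$. The dimension count of Theorem~1 suffices instead:
\begin{itemize}
\item $|\phi_{1}\rangle,\dots,|\phi_{k}\rangle$ have pairwise nonorthogonal $B$ and $C$ parts;
\item so any nonempty $\mathcal{T}_{A}$, under any $U_{A}$, would force $k$ orthogonal reduced $A$ states into dimension at most $(2k+1)/3<k$;
\item this also covers prime $d_{1}$ (discarding all of $A$), which is not a vacuous case.
\end{itemize}
For $\mathcal{T}_{A}=\emptyset$, the $A$ parts of $|\phi_{1+k}\rangle$ and $|\phi_{2k-1}\rangle$ are orthogonal. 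Your $B$-family therefore yields only $d_{2}-2$ usable states, but $d_{2}-2>d_{2}/2$, which still suffices.
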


The proof of Theorem 4 is similar to Theorem 3.

\section{Conclusion}\label{sec:6}
In this manuscript, several structures of locally distinguishable orthogonal product states are presented, whose genuine nonlocality can be activated by orthogonality-preserving local
measurements. In Ref.\cite{Bandyopadhyay2021} an open problem was raised: So whether the activation phenomenon can also be observed in other manifestations of nonlocality is an intriguing question. Here we have found the activation phenomenon in genuine nonlocality that is stronger than nonlocality. We have first constructed $2d_{2}+2d_{3}-4$ orthogonal product states with type-\uppercase\expandafter{\romannumeral 1} genuine nonlocality in $\mathbb{C}^{d_{1}}\otimes\mathbb{C}^{d_{2}}\otimes\mathbb{C}^{d_{3}}$~($4\leq d_{1}\leq d_{2}\leq d_{3}$). Compared with Ref.\cite{Rout2021}, the cardinality of this construction is significantly reduced. Then we activated the type-I genuine nonlocality of distinguishable orthogonal product state sets in $\mathbb{C}^{d_{1}}\otimes\mathbb{C}^{d_{2}}\otimes\mathbb{C}^{d_{3}}$~($11\leq d_{1}\leq d_{2}\leq d_{3}$, $d_{i}$ is odd) and $\mathbb{C}^{d_{1}}\otimes\mathbb{C}^{d_{2}}\otimes\mathbb{C}^{d_{3}}$~($8\leq d_{1}\leq d_{2}\leq d_{3}$, $d_{i}$ is even). For the type-\uppercase\expandafter{\romannumeral 2} genuine nonlocality, we have put forward activatable sets without entanglement in $\mathbb{C}^{d_{1}}\otimes\mathbb{C}^{d_{2}}\otimes\mathbb{C}^{d_{3}}$~(when $d_{1}$ is odd, $5\leq \frac{d_{1}+1}{2}\leq d_{2}\leq d_{3}$; when $d_{1}$ is even, $4\leq \frac{d_{1}}{2}\leq d_{2}\leq d_{3}$). In particular, we have applied $p$-ary numeral systems to simplify the proof of local irredundancy for even-dimensional subsystems.

This work focuses on the activation of distinguishable orthogonal quantum state sets under \textit{single-party local operations}. It would be also interesting to explore the activation of nonlocality under \textit{multiparty joint operations}. Moreover, we have investigated primarily the activation of \textit{genuine nonlocality}. The activation of other variants such as \textit{strong nonlocality} and \textit{strongest nonlocality} for local indistinguishability warrants further investigation.

\section*{Acknowledgments}

This work is supported by the NSFC (Grants Nos. 12571493 and 62272208), the Interdisciplinary Research Fundation of Hebei Normal University (Grant No. L2025J01), the Basic Research Project of Shijiazhuang Municipal Universities in Hebei Province (Grant No. 241790697A), the Open Foundation of State key Laboratory of Networking and Switching Technology (Beijing University of Posts and Telecommunications) (SKLNST-2025-1-19).

\appendix

\section{Proof of Proposition \ref{prop:gn:d1d2d3}} \label{app:prop:gn:d1d2d3}

\begin{proof}
Consider the subset $\{|\phi_{s}\rangle\} _{s=1}^{2d_{2}-1}$: Alice and Bob share a locally indistinguishable set in $\mathbb{C}^{d_{1}}\otimes\mathbb{C}^{d_{2}}$, while in Charlie's side are nonorthogonal. As a result, even if Charlie comes together with either Alice or Bob, it is not possible to perfectly distinguish these states, leading to the local indistinguishability in partitions $B|CA$ and $A|BC$.

Similarly, the set $\{|\phi_{2d_{1}-2}\rangle \bigcup \{ |\phi_{s}\rangle\} _{s=2d_{2}-1}^{2d_{2}+2d_{3}-4}\}$ is a locally indistinguishable set in $\mathbb{C}^{d_{1}}\otimes\mathbb{C}^{d_{3}}$ shared between Alice and Charlie, the members are nonorthogonal in Bob's side. Then, even if Bob comes together with either Alice or Charlie, it is not possible to perfectly distinguish these states, leading to the local indistinguishability in partitions $C|AB$ and $A|BC$.

Moreover, the above set is locally irreducible when all the parties are separated, since they can only perform trivial OPLMs. Hence, it is genuinely nonlocal of type \uppercase\expandafter{\romannumeral 2}. This completes the proof.
\end{proof}

\section{Proof of Lemma \ref{lemma}} \label{app:lemma}

\begin{proof}
We need to prove that all orthogonality-preserving measurements performed by Alice and Bob must be trivial.

If Alice makes the first measurements with the POVM elements given by $E_{1}=(a_{i,j})_{i,j\in\mathbb{Z}_{d_{1}}}$, we have $\langle\phi_{i}|E_{1}\otimes I_{2}|\phi_{j}\rangle=0$  for $i\neq j$. From Table \ref{tab:Alice}, we observe that $E_{1}\propto I$, indicating that Alice can only perform trivial measurements.
\begin{table}[htbp]\small
  \newcommand{\tabincell}[2]{\begin{tabular}{@{}#1@{}}#2\end{tabular}}
  \centering
  \begin{tabular}{lll}
  \toprule
  \hline
  \specialrule{0em}{1.5pt}{1.5pt}
    Pair of states~~&~~Key entries~~~&~~Value range~~~\\
  \specialrule{0em}{1.5pt}{1.5pt}
  \midrule
  \specialrule{0em}{1.5pt}{1.5pt}
    \tabincell{c}{$|\phi_{1}\rangle,|\phi_{2}\rangle$}&~~\tabincell{l}{$a_{(d_{1}-2),(d_{1}-1)}=0$\\$a_{(d_{1}-1),(d_{1}-2)}=0$}&~~\tabincell{c}{null}\\
    \specialrule{0em}{3.5pt}{3.5pt}
    \tabincell{c}{$|\phi_{1}\rangle,|\phi_{i}\rangle$}&~~\tabincell{l}{$a_{(d_{1}-2),(d_{1}-i)}=0$\\$a_{(d_{1}-i),(d_{1}-2)}=0$}&~~\tabincell{c}{$3\leq i\leq d_{1}$}\\
    \specialrule{0em}{3.5pt}{3.5pt}
    \tabincell{c}{$|\phi_{2}\rangle,|\phi_{i}\rangle$}&~~\tabincell{l}{$a_{(d_{1}-1),(d_{1}-i)}=0$\\$a_{(d_{1}-i),(d_{1}-1)}=0$}&~~\tabincell{c}{$3\leq i\leq d_{1}$}\\
    \specialrule{0em}{3.5pt}{3.5pt}
    \tabincell{c}{$|\phi_{i}\rangle,|\phi_{i^{'}}\rangle$}&~~\tabincell{l}{$a_{(d_{1}-i),(d_{1}-i^{'})}=0$\\$a_{(d_{1}-i^{'}),(d_{1}-i)}=0$}&~~\tabincell{c}{$3\leq i\neq i^{'}\leq d_{1}$}\\
    \specialrule{0em}{3.5pt}{3.5pt}
    \tabincell{c}{$|\phi_{i+d_{1}}\rangle,|\phi_{2d_{2}-1}\rangle$}&~~\tabincell{l}{~~~$a_{(d_{1}-i),(d_{1}-i)}$\\$=a_{(d_{1}-1),(d_{1}-1)}$}&~~\tabincell{c}{$2\leq i\leq d_{1}$}\\
    \specialrule{0em}{1.5pt}{1.5pt}
    \hline
    \bottomrule
  \end{tabular}
  \caption{The entries in operator $E_{1}$}
  \label{tab:Alice}
\end{table}

Similarly, if Bob performs local measurements with the POVM elements given by $E_{2}=(b_{i,j})_{i,j\in\mathbb{Z}_{d_{2}}}$, we have $\langle\phi_{i}|I_{1}\otimes E_{2}|\phi_{j}\rangle=0$ for $i\neq j$. From the Table \ref{tab:Bob}, we conclude that Bob's operators must be trivial too.
\begin{table}[htbp]\small
  \newcommand{\tabincell}[2]{\begin{tabular}{@{}#1@{}}#2\end{tabular}}
  \centering
  \begin{tabular}{lll}
  \toprule
  \hline
  \specialrule{0em}{1.5pt}{1.5pt}
    Pair of states~~&Key entries~~~&Value range~~~\\
  \specialrule{0em}{1.5pt}{1.5pt}
  \midrule
  \specialrule{0em}{1.5pt}{1.5pt}
    \tabincell{c}{$|\phi_{1+d_{1}}\rangle,|\phi_{j+d_{1}}\rangle$}&\tabincell{c}{$b_{2,j}=b_{j,2}=0$}&\tabincell{c}{$d_{1}+1\leq j\leq d_{2}-1$}\\
    \specialrule{0em}{3.5pt}{3.5pt}
    \tabincell{c}{$|\phi_{1+d_{1}}\rangle,|\phi_{2+d_{1}}\rangle$}&\tabincell{c}{$b_{2,3}=b_{3,2}=0$}&\tabincell{c}{null}\\
    \specialrule{0em}{3.5pt}{3.5pt}
    \tabincell{c}{$|\phi_{1+d_{1}}\rangle,|\phi_{2d_{1}-1}\rangle$}&\tabincell{c}{$b_{2,d_{1}}=b_{d_{1},2}=0$}&\tabincell{c}{null}\\
    \specialrule{0em}{3.5pt}{3.5pt}
    \tabincell{c}{$|\phi_{i+d_{1}}\rangle,|\phi_{i^{'}+d_{1}}\rangle$}&\tabincell{l}{$b_{(i+1),(i^{'}+1)}=0$\\$b_{(i^{'}+1),(i+1)}=0$}&\tabincell{c}{$2\leq i\neq i^{'}\leq d_{1}-1$}\\
    \specialrule{0em}{3.5pt}{3.5pt}
    \tabincell{c}{$|\phi_{i+d_{1}}\rangle,|\phi_{2d_{1}}\rangle$}&\tabincell{l}{$b_{(i+1),1}=0$\\$b_{1,(i+1)}=0$}&\tabincell{c}{$2\leq i\leq d_{1}-1$}\\
     \specialrule{0em}{3.5pt}{3.5pt}
    \tabincell{c}{$|\phi_{i+d_{1}}\rangle,|\phi_{j+d_{1}}\rangle$}&\tabincell{l}{$b_{(i+1),j}=0$\\$b_{j,(i+1)}=0$}&\tabincell{l}{$2\leq i\leq d_{1}-1,$\\$d_{1}+1\leq j\leq d_{2}-1$}\\
     \specialrule{0em}{3.5pt}{3.5pt}
    \tabincell{c}{$|\phi_{2d_{1}}\rangle,|\phi_{j+d_{1}}\rangle$}&\tabincell{c}{$b_{1,j}=b_{j,1}=0$}&\tabincell{c}{$d_{1}+1\leq j\leq d_{2}-1$}\\
     \specialrule{0em}{3.5pt}{3.5pt}
    \tabincell{c}{$|\phi_{j+d_{1}}\rangle,|\phi_{j^{'}+d_{1}}\rangle$}&\tabincell{c}{$b_{j,j^{'}}=b_{j^{'},j}=0$}&\tabincell{l}{$d_{1}+1\leq j\neq j^{'}$\\$\leq d_{2}-1$}\\
     \specialrule{0em}{3.5pt}{3.5pt}
    \tabincell{c}{$|\phi_{i}\rangle,|\phi_{i+d_{1}}\rangle$}&\tabincell{l}{$b_{0,(i+1)}=0$\\$b_{(i+1),0}=0$}&\tabincell{c}{$2\leq i\leq d_{1}-1$}\\
     \specialrule{0em}{3.5pt}{3.5pt}
    \tabincell{c}{$|\phi_{d_{1}}\rangle,|\phi_{2d_{1}}\rangle$}&\tabincell{c}{$b_{0,1}=b_{1,0}=0$}&\tabincell{c}{null}\\
     \specialrule{0em}{3.5pt}{3.5pt}
    \tabincell{c}{$|\phi_{2}\rangle,|\phi_{i+d_{1}}\rangle$}&\tabincell{l}{$b_{2,(i+1)}=0$\\$b_{(i+1),2}=0$}&\tabincell{c}{$3\leq i\leq d_{1}-1$}\\
     \specialrule{0em}{3.5pt}{3.5pt}
    \tabincell{c}{$|\phi_{2}\rangle,|\phi_{2d_{1}}\rangle$}&\tabincell{c}{$b_{1,2}=b_{2,1}=0$}&\tabincell{c}{null}\\
     \specialrule{0em}{3.5pt}{3.5pt}
    \tabincell{c}{$|\phi_{2}\rangle,|\phi_{j+d_{1}}\rangle$}&\tabincell{c}{$b_{0,j}=b_{j,0}=0$}&\tabincell{c}{$d_{1}+1\leq j\leq d_{2}-1$}\\
     \specialrule{0em}{3.5pt}{3.5pt}
    \tabincell{c}{$|\phi_{1}\rangle,|\phi_{1+d_{1}}\rangle$}&\tabincell{c}{$b_{0,2}=b_{2,0}=0$}&\tabincell{c}{null}\\
     \specialrule{0em}{3.5pt}{3.5pt}
    \tabincell{c}{$|\phi_{i}\rangle,|\phi_{2d_{2}-1}\rangle$}&\tabincell{c}{$b_{0,0}=b_{i,i}$}&\tabincell{c}{$1\leq i\leq d_{1}$}\\
     \specialrule{0em}{3.5pt}{3.5pt}
    \tabincell{c}{$|\phi_{d_{1}+d_{2}}\rangle,|\phi_{2d_{2}-1}\rangle$}&\tabincell{c}{$b_{2,2}=b_{(d_{1}+1),(d_{1}+1)}$}&\tabincell{c}{null}\\
     \specialrule{0em}{3.5pt}{3.5pt}
    \tabincell{c}{$|\phi_{s_{1}+d_{2}-1}\rangle,|\phi_{2d_{2}-1}\rangle$}&\tabincell{c}{$b_{(s_{1}-1),(s_{1}-1)}=b_{s_{1},s_{1}}$}&\tabincell{c}{proposition 1}\\
     \specialrule{0em}{3.5pt}{3.5pt}
    \tabincell{c}{$|\phi_{t_{1}+d_{2}-1}\rangle,|\phi_{2d_{2}-1}\rangle$}&\tabincell{c}{$b_{(t_{1}-1),(t_{1}-1)}=b_{t_{1},t_{1}}$}&\tabincell{c}{proposition 1}\\
    \specialrule{0em}{1.5pt}{1.5pt}
    \hline
    \bottomrule
  \end{tabular}
  \caption{The entries in operator $E_{2}$}
  \label{tab:Bob}
\end{table}
\end{proof}


\end{document}